\newcommand{\tree}{{\mathtt{tree}}}
\newcommand{\inc}{{\mathtt{inc}}}
\newcommand{\paramstyle}[1]{\text{\sf #1}}
\newcommand{\tw}{\paramstyle{tw}}
\newcommand{\cvd}{\paramstyle{cvd}}
\newenvironment{claimproof}[1][Proof of the Claim]{%
  \par\noindent\textit{#1.} \ }%
  {\hfill\ensuremath{\blacksquare}\par}
\newcommand*\patchAmsMathEnvironmentForLineno[1]{
  \expandafter\let\csname old#1\expandafter\endcsname\csname #1\endcsname
  \expandafter\let\csname oldend#1\expandafter\endcsname\csname end#1\endcsname
  \renewenvironment{#1}
     {\linenomath\csname old#1\endcsname}
     {\csname oldend#1\endcsname\endlinenomath}}
\newcommand*\patchBothAmsMathEnvironmentsForLineno[1]{
  \patchAmsMathEnvironmentForLineno{#1}
  \patchAmsMathEnvironmentForLineno{#1*}}
\begin{document}
\title{Finding a HIST: Chordality, Structural Parameters, and Diameter\thanks{This work is partially supported by 
JP21K17707, % Hanaka (Wakate)
JP22H00513, % Ono (Ono A)
JP23H04388, % Hanaka (Gakuhen)
JP24K02898, % Ono (Seto B)
JP25K03077, %Hanaka (kiban B)
and
JST, CRONOS, JPMJCS24K2.
}
}
%
%\titlerunning{Abbreviated paper title}
% If the paper title is too long for the running head, you can set
% an abbreviated paper title here
%
\author{
Tesshu Hanaka \and  
Hironori Kiya \and
Hirotaka Ono
}
\authorrunning{T. Hanaka et al.}
% First names are abbreviated in the running head.
% If there are more than two authors, 'et al.' is used.
%
\institute{Kyushu University, Fukuoka 819-0395, Japan \email{hanaka@inf.kyushu-u.ac.jp}\and 
Nagoya University, Nagoya 464-8601, Japan \email{ono@nagoya-u.jp}\and 
Osaka Metropolitan University, Osaka 599-8531 
\email{kiya@omu.ac.jp}}
\maketitle              % typeset the header of the contribution
%
%\linenumbers
\begin{abstract}
A \emph{homeomorphically irreducible spanning tree} (HIST) is a spanning tree with no degree-2 vertices, serving as a structurally minimal backbone of a graph. While the existence of HISTs has been widely studied from a structural perspective, the algorithmic complexity of finding them remains less understood. In this paper, we provide a comprehensive investigation of the HIST problem from both structural and algorithmic viewpoints.  
We present a simple characterization that precisely describes which chordal graphs of diameter at most~3 admit a HIST, leading to a polynomial-time decision procedure for this class. In contrast, we show that the problem is NP-complete for strongly chordal graphs of diameter~4. From the perspective of parameterized complexity, we establish that the HIST problem is W[1]-hard when parameterized by clique-width, indicating that the problem is unlikely to be efficiently solvable in general dense graphs. On the other hand, we present fixed-parameter tractable (FPT) algorithms when parameterized by treewidth, modular-width, or cluster vertex deletion number. Specifically, we develop an $O^*(4^{k})$-time algorithm parameterized by modular-width~$k$, and an FPT algorithm parameterized by the cluster vertex deletion number based on kernelization techniques that bound clique sizes while preserving the existence of a HIST. These results together provide a clearer understanding of the structural and computational boundaries of the HIST problem.
\keywords{spanning tree \and HIST \and chordal graph \and diameter \and modular width.}
\end{abstract}
\section{Introduction}
Let \( G = (V, E) \) be a connected graph. A \emph{homeomorphically irreducible spanning tree} (HIST) of \( G \) is a spanning tree of \( G \) that contains no vertex with degree two~\cite{hill1974graphs}. 
%In such a tree, the absence of degree-2 vertices implies that it cannot be subdivided without altering its topological structure, and thus the term ``homeomorphically irreducible'' is used.
Because a HIST has no degree-2 vertex, it cannot be obtained by subdividing the edges of a smaller tree; hence, it is called ``homeomorphically irreducible.''

This exclusion imposes a particular structural constraint compared to general spanning trees, often resulting in a more branching or less linear configuration. In trees, degree-2 vertices typically function as internal points along paths. By eliminating them, the definition enforces that all internal vertices (i.e., branches) must have a degree of at least 3. This structural property makes HISTs particularly relevant in contexts where long chains of vertices are undesirable or where higher branching connectivity is preferred. Such trees or trees with a few degree-2 vertices have applications in the design of efficient communication networks with minimal redundancy~\cite{ZHALECHIAN201820,DBLP:journals/integration/CongHKM96}. Beyond their practical relevance, HISTs are also of significant theoretical interest, as they play a key role in major conjectures in graph theory, such as the 3-decomposition conjecture~\cite{HoffmannOstenhof2011,Bachtler2023}. For these reasons, HISTs have been extensively studied from both applied and theoretical perspectives.
%Such trees have applications in designing efficient communication networks with minimal redundancy, and this distinguishing feature motivates many research questions concerning the existence and structure of HISTs.
%naturally arise in the study of irreducible decompositions of graphs and 

HISTs have been studied extensively in graph theory, particularly with respect to their existence. For example, the paper by Albertson et al.~\cite{DBLP:journals/jgt/AlbertsonBHT90} shows that if a graph with $n$ vertices has minimum degree at least $\sqrt{n}$, then it admits a HIST, and that this bound is nearly tight. 
The paper also demonstrates that some graphs with diameter two do not admit a HIST, but proves that every such graph has a spanning tree with at most three degree-2 vertices. 
Later, Shan and Tsuchiya~\cite{DBLP:journals/jgt/ShanT23} completely characterized diameter-2 graphs that contain a HIST. 
Albertson et al.~\cite{DBLP:journals/jgt/AlbertsonBHT90} also resolved a conjecture by Hill~\cite{hill1974graphs} by confirming that every triangulation of the plane has a HIST. They extended this to near-triangulations, showing that the structure of planar embeddings can guarantee the existence of a HIST under certain face and edge conditions. 
Importantly, in 2013, Chen and Shan~\cite{DBLP:journals/jct/ChenS13} resolved an open problem that had remained unsolved for more than 20 years~\cite{Malkevitch1979,DBLP:journals/jgt/AlbertsonBHT90}, proving that any graph in which every edge is contained in at least two triangles contains a HIST.

However, from a practical standpoint, the existence of a HIST is not the only aspect of interest. In many applications, constructing a HIST, rather than merely proving its existence, is equally or even more critical. Despite this, the algorithmic task of constructing a HIST has received relatively little attention in the literature. 
It is perhaps not surprising that deciding whether a given graph admits a HIST is NP-complete~\cite{DBLP:journals/jgt/AlbertsonBHT90}, even for planar graphs with maximum degree at most 3~\cite{DBLP:journals/dm/Douglas92}. 
However, as far as the authors know, few additional results are known beyond these.

In this paper, we address algorithms and computational complexity for finding HISTs from two complementary perspectives.
The first perspective investigates the relationship between the existence of HISTs and structural properties of graphs, such as graph diameter and triangulated structures, which have been central themes in previous studies.
Regarding graph diameter, small diameters, such as 3 or 4, are of particular interest based on the series of prior works. On the side of triangulated structures, we focus on \emph{chordal graphs}, also known as \emph{triangulated graphs}. 
Chordal graphs, characterized by the presence of chords in every cycle, form a graph class where many problems that are otherwise computationally hard can be solved efficiently. In particular, their strong relationship with tree decompositions plays a key role in algorithm design. Furthermore, the class of chordal graphs includes several well-known subclasses, such as split graphs, block graphs, and interval graphs, which have been widely studied. Based on these, we examine the existence of HISTs and the computational complexity of finding them in chordal graphs with small diameters.

The second perspective focuses on algorithmic aspects, particularly on the design of efficient algorithms for constructing HISTs in general graphs.  
To circumvent the NP-hardness of the problem, two primary approaches are commonly considered: the design of approximation algorithms or parameterized algorithms.  
Since the HIST problem is also a decision problem, it is natural to focus on the latter, namely, the design of parameterized algorithms.  
In this study, we investigate whether the HIST problem is fixed-parameter tractable (FPT) with respect to structural graph parameters such as treewidth and modular-width.  
The detailed results of this study are presented in \Cref{sec:contribution}.

\subsection{Our Contribution}\label{sec:contribution}
\paragraph*{Chordal graphs with a small diameter}
As seen in the previous section, prior studies on the existence of HISTs have primarily focused on the role of triangle structures and graph diameter. Shan and Tsuchiya~\cite{DBLP:journals/jgt/ShanT23} provide a complete characterization of graphs with diameter 2 that admit a HIST. Based on this result, we first confirm that such graphs can be recognized in polynomial time.

We then investigate split graphs and block-split graphs, which are representative examples of chordal graphs with diameter 3. In this paper, we first characterize block-split graphs that admit a HIST (\Cref{thm:blocksplit}) and then extend this characterization to split graphs (\Cref{thm:split}) and chordal graphs with diameter $3$ (\Cref{thm:d3}) that admits a HIST. Using these characterizations, we can find a HIST of a given chordal graph with diameter $3$ in polynomial time. On the other hand, we prove that deciding whether a strongly chordal graph with diameter 4 admits a HIST is NP-complete, which provides a sharp boundary.
The results can be summarized as follows:
\begin{itemize}
    \item For graphs with diameter at most 2, the existence of a HIST can be decided in polynomial time [\Cref{thm:diameter2}].
    \item In contrast, deciding whether a graph with diameter 4 admits a HIST is NP-complete, even when restricted to strongly chordal graphs [\Cref{thm:NP:diameter}].
    \item For chordal graphs with diameter 3, we give a simple characterization of graphs that admit a HIST, which enables us to determine whether it has a HIST in polynomial time [\Cref{thm:d3}]. However, the computational complexity of deciding whether a given graph with diameter 3 contains a HIST remains open. 
\end{itemize}

%We show several complexity-theoretic and algorithmic results regarding the existence of HISTs under various structural graph restrictions. 
%First, we focus on block-split graphs and provide a complete structural characterization of HIST's existence. This characterization leads to a linear-time algorithm for finding HIST. Furthermore, this characterization can be extended to arbitrary split graphs, resulting in a polynomial-time algorithm for the general case. 
%We then prove that deciding whether a strongly chordal graph of diameter at most four admits a HIST is NP-complete, thereby placing a tight boundary between tractable and intractable diameter cases. 

\paragraph*{Algorithms for structural graph parameters}
As a basic result, we first present a \( 4^n n^{O(1)} \)-time algorithm.  
We then investigate its parameterized complexity. Since the reduction in the proof of \Cref{thm:NP:diameter} implies that the problem is W[1]-hard when parameterized by clique-width, we turn our attention to more specific parameters: treewidth, modular-width, and cluster-vertex-deletion number.

These parameters provide upper bounds on clique-width and characterize more restricted graph classes (see e.g., ~\cite{tran2022expanding}). Moreover, they capture distinct structural properties: treewidth reflects global tree-likeness, modular-width measures recursive modularity, and cluster-vertex-deletion number quantifies proximity to clustered structures. By studying the problem with respect to these parameters, we aim to clarify the structural conditions under which the problem becomes tractable, providing a more fine-grained perspective beyond clique-width.

\Cref{fig:graph-parameters} summarizes the current landscape of parameterized complexity with respect to these structural graph parameters.  
These results collectively help clarify the structural boundaries that govern the computational complexity of finding a HIST.

% %他の結果も入る？
% \begin{figure}[htbp]
% \begin{center}
% \includegraphics[width=60mm]{parameter.pdf}
% \caption{Parameterized complexity}
% \end{center}
% \end{figure}

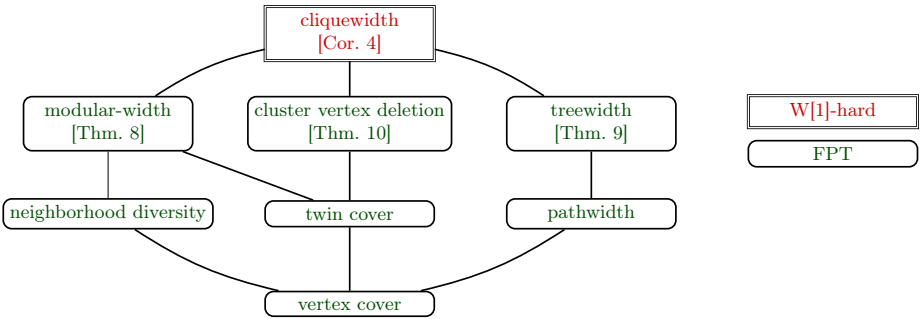
\begin{figure}[ht]
\centering
\scalebox{0.8}{ % Start scaling
\begin{tikzpicture}[
  every node/.style={draw, rectangle, rounded corners, align=center, minimum width=2.8cm, font=\small},
  Whard/.style={text=red!80!black,draw, rectangle,  double, sharp corners},
  XP/.style={text=blue, draw, rectangle, sharp corners},
  FPTAS/.style={text=orange,dash pattern=on 3pt off 2pt},
  FPT/.style={text=green!30!black,thick},
  Open/.style={text=black,dash pattern=on 3pt off 2pt,sharp corners,thick},
%  ->, >=Stealth
]

\node[FPT] (cvd) at (0,3) {cluster vertex deletion\\ \text{[Thm. \ref{thm:cvd}]}};
\node[FPT] (tc) at (0,1.5) {twin cover};
%\node[FPT] (dtc) at (-6.5,-3.5) {distance to clique$^*$};

\node[FPT] (mw) at (-4,3) {modular-width\\ \text{[Thm. \ref{thm:mw}]}};
\node[FPT] (nd) at (-4,1.5) {neighborhood diversity};

\node[Whard] (cw) at (0,4.5) {cliquewidth\\ \text{[Cor. \ref{cor:NP:other}]}};
\node[FPT] (tw) at (4,3) {treewidth\\ \text{[Thm. \ref{thm:treewidth}]}};
\node[FPT] (pw) at (4,1.5) {pathwidth};
% \node[paraNP] (pw) at (-0.5,1) {pathwidth};
% \node[paraNP] (td) at (-0.5,-0.5) {tree-depth$^*$};
% \node[FPT] (vi) at (-0.5,-2) {vertex integrity$^*$};
\node[FPT] (vc) at (0,0) {vertex cover};

\draw [thick] (cw) to[bend left=10] (tw);
\draw (mw) -- (nd);

%\draw (nd) to[bend left=20] (dtc);

\draw [thick] (cw) -- (cvd);
\draw [thick] (cvd) -- (tc);
\draw [thick] (mw) -- (tc);

\draw [thick] (tw) -- (pw);
\draw [thick] (tc) -- (vc);

\draw [thick] (pw) to[bend left=10] (vc);

\draw [thick] (cw)  to[bend right=10]   (mw);

\draw [thick] (nd) to[bend right=10] (vc);

\node[Whard] at (8,3.2) {\textcolor{red!80!black}{W[1]-hard}};
\node[FPT] at (8,2.5) {\textcolor{green!30!black}{FPT}};
%\node[Open] at (7.5,2.6) {Open};
%\node[Open] at (7,1.8) {FPT approximable};
%\node[draw, rectangle, sharp corners] at (7.5,3.5) {\textcolor{red!80!black}{paraNP-complete}\\
%%\textcolor{blue}{XP}\\
%%\textcolor{orange}{FPT-AS}\\
%\textcolor{green!30!black}{FPT}\\
%Open
%};

\end{tikzpicture}
}
\caption{Parameterized complexity of HIST with respect to structural graph parameters.
The connection between two parameters means that the upper parameter $p$ is bounded by some
computable function $f(\cdot)$ of the lower parameter $q$, i.e., $p \le f (q)$.
The double and rounded rectangles indicate that the problem is W[1]-hard and fixed-parameter tractable, respectively.
}
\label{fig:graph-parameters}
\end{figure}

\subsection{HIST: history, related work, and related notions}
%Research on homeomorphically irreducible trees(HITs) dates back to the 1950s~\cite{harary1959number, DBLP:books/daglib/0071223}. 
Research on HISTs began with Hill in 1974~\cite{hill1974graphs}, who conjectured that every triangulation of the sphere with minimum degree at least 4 contains a HIST. This initial conjecture led to early studies focusing on planar graphs.  
In 1979, Malkevitch strengthened the conjecture by extending it to near-triangulations~\cite{Malkevitch1979}.  
In 1990, Albertson, Berman, Hutchinson, and Thomassen not only confirmed Malkevitch's conjecture but also raised a broader question: whether triangulations on any surface admit a HIST~\cite{DBLP:journals/jgt/AlbertsonBHT90}. This was later formalized as a conjecture by Archdeacon~\cite{Archdeacon_2009}. The same article also suggests that the conjecture may hold for a broader class of graphs, specifically those in which every edge belongs to at least two triangles. 
These conjectures are partially resolved by Chen, Ren, and Shan~\cite{DBLP:journals/cpc/ChenRS12}, who proved that every connected and locally connected graph with minimum degree greater than 3 contains a HIST. Finally, Chen and Shan~\cite{DBLP:journals/jct/ChenS13} resolved the conjecture by Archdeacon that any graph in which every edge is contained in at least two triangles contains a HIST. 

Closely related to HISTs, the 3-decomposition conjecture, formulated by Hoffmann-Ostenhof in 2011, stands as a fundamental open question in graph theory. It asserts that every connected cubic graph admits a decomposition into a spanning tree, a 2-regular subgraph, and a matching~\cite{HoffmannOstenhof2011,Cameron2011}.
The conjecture is intrinsically linked to HISTs: such a decomposition can always be derived from a HIST, and any graph containing a HIST trivially satisfies the conjecture~\cite{HoffmannOstenhof2011}. HISTs, thus, serve as a central concept in addressing this problem, and thus some work focuses on the existence of a HIST of a cubic graph~\cite{DBLP:journals/jgt/Hoffmann-Ostenhof18a}.
The conjecture has been verified for several graph classes, including planar graphs~\cite{DBLP:journals/jgt/Hoffmann-Ostenhof18}, claw-free graphs~\cite{ahanjideh2018hoffmannostenhofsconjectureclawfreecubic}, and graphs with pathwidth at most 4~\cite{DBLP:conf/lagos/BachtlerH23}. These results emphasize the importance of structural graph parameters and the critical role of HISTs as fundamental tools in graph decomposition theory. 

%Several studies have addressed the existence of large homeomorphically irreducible trees as subgraphs (and not necessarily spanning), including~\cite{DBLP:journals/dam/DiemunschFSTWWY15, DBLP:journals/dm/FuruyaT13, DBLP:journals/jgt/FuruyaT20}.
%The study of homeomorphically irreducible spanning trees (HISTs) was initiated by Hill~\cite{hill1974graphs}. Albertson et al.~\cite{DBLP:journals/jgt/AlbertsonBHT90} show the NP-completeness of determining whether a graph admits a HIST. % via a reduction from the Hamiltonian path problem. 
%Regarding sufficient conditions of a graph having a HIST, Chen et al.~\cite{DBLP:journals/cpc/ChenRS12} show that a graph has a HIST if every edge is contained in two triangles, while Hoffmann et al. \cite{DBLP:journals/jgt/Hoffmann-Ostenhof18a} show sufficient conditions in cubic graphs.

%In addition, Zheng et al. studied \emph{odd spanning trees}, which are closely related to HISTs. In particular, they provided a necessary and sufficient condition for an odd spanning tree in split graphs, which are also considered in this paper. 
%Note that the results and proofs in their work do not yield a decision procedure for the existence of a HIST.
Another notion closely related to HIST is \emph{odd spanning tree}. A spanning tree is called \emph{odd} if every vertex has an odd degree, and a spanning tree of a graph is called an odd spanning tree if it is odd. 
Since an odd spanning tree is a HIST, it is a special case of HISTs. 
Recently, Zheng et al.~\cite{zheng2025odd} have studied \emph{odd spanning trees}, and provided a necessary and sufficient condition that a split graph admits an odd spanning tree.  

%\cite{harary1959number}はhomeomorphically irreducible treeの種類を数え上げる．
%\cite{DBLP:books/daglib/0071223}はhomeomorphically irreducible treeと等価なseries-reduced treesに関する研究とその科学的応用を示唆？（論文未確認）
%\cite{hill1974graphs}はHISTに関する最初の研究であると\cite{DBLP:journals/dm/ZhaiWH019}などで言及（論文未確認）．
%\cite{DBLP:journals/jgt/AlbertsonBHT90}はNP困難であることをハミルトンパスからの帰着で示した研究．
%\cite{DBLP:journals/cpc/ChenRS12}の中の定理Let G be a graph with every edge in at least two triangles. Then G contains a HIST.
%\cite{DBLP:journals/jgt/Hoffmann-Ostenhof18a}はcubic graphでのHIST存在のための必要条件を示している．
%\cite{DBLP:journals/dm/ZhaiWH019}はhexangulation　of a surface(全ての面が６頂点からなるcubic graph)におけるHISTの存在性の必要十分条件を与えている．
%\cite{DBLP:journals/jgt/FuruyaT20}は$(P_i)$フリーな部分グラフが持つHITのサイズに関する研究．例えば$(P_4)$フリーなら頂点数−１のHITが存在することを\cite{DBLP:journals/dm/FuruyaT13}では示しているし，$(P_5)$フリーなら頂点数−2のHITが存在することを \cite{DBLP:journals/dam/DiemunschFSTWWY15}では示してい/る．
%\cite{zheng2025odd}はodd spanning treeの存在判定を行なっている論文．スプリットグラフでの結果もあるがこの結果や証明からはHISTの存在性判定はできない．

\medskip 

The remainder of this paper is organized as follows. Section~2 introduces fundamental definitions and basic results. Section~3 provides structural characterizations of chordal graphs with diameter at most~3, including a simple necessary and sufficient condition for the existence of a HIST. Section~4 presents hardness results, showing that deciding whether a strongly chordal graph with diameter~4 admits a HIST is NP-complete. Section~5 develops an exact exponential-time algorithm for the general case. Section~6 focuses on parameterized algorithms, presenting fixed-parameter tractability results with respect to treewidth, modular-width, and cluster vertex deletion number.

\section{Preliminaries}
\subsection{Notations and terminology}
% ono: この節では Notation と Terminology だけなのでこの書き出しは不適． 
%We begin by recalling several basic definitions and known complexity results relevant to our study.
We assume basic knowledge of graph theory and discrete algorithms. 
%Let $\mathbb{Z}^+$ denote the set of all positive integers. 
All graphs considered in this paper are finite, simple, and undirected.
Let $G = (V(G), E(G)) $ be a simple undirected graph, where $V(G)$ and $E(G)$ respectively denote the set of vertices and the set of edges of $G$. 
We sometimes simply write $V$ and $E$ to refer to $V(G)$ and $E(G)$, respectively. 
A graph $G$ is said to be \emph{connected} if a path exists between every pair of vertices in $V$.
% ono: degree の前に neighborhood $N_G(v), N_G[v]$の定義があった方が良い．
For a graph $G = (V, E)$ and a vertex $v \in V$, the \emph{neighborhood} of $v$, denoted by $N_G(v)$, is the set of vertices adjacent to $v$, that is, $N_G(v) = \{ u \in V \mid \{u, v\} \in E \}$. 
%
%Similarly, the closed \emph{closed neighborhood} of $v$, denoted by $N_G[v]$, is $N_G(v)\cup \{v\}$.  
%
The \emph{degree} of a vertex $v \in V$ in $G$, denoted by $d_G(v)$, is the number of vertices in $N_G(v)$, i.e., $|N_G(v)|$. Both $N_G(v)$ and $d_G(v)$ may be simply denoted respectively by $N(v)$ and $d(v)$, if the considered graph $G$ is clear from context.  
An edge $\{u, v\} \in E$ is called a \emph{pendant edge} if one of its endpoints has degree one. A vertex of degree one is called a \emph{pendant vertex}, the endpoint of a pendant edge.
\emph{Twins} are a pair of vertices that share the same neighbors, excluding each other. When twins are adjacent, they are called \emph{true twins}; when they are non-adjacent, they are called \emph{false twins}. Formally, vertices \( u \) and \( v \) are twins if \( N(u) \setminus \{v\} = N(v) \setminus \{u\} \). They are \emph{true twins} if they are adjacent, and \emph{false twins} if they are not adjacent and satisfy \( N(u) = N(v) \).
A \emph{path} in $G$ is a sequence of distinct vertices $(v_0, v_1, \ldots, v_k)$ such that $\{v_{i-1}, v_i\} \in E$ for all $i=1,\dots, k$. The length of the path is defined by the number of its edges, i.e., $k$.
The \emph{distance} between two vertices $u$ and $v$ in $G$ is the length of the shortest path between $u$ and $v$.
The \emph{diameter} of $G$ is the maximum distance between any pair of vertices in $V$.
A \emph{spanning tree} of a connected graph $G = (V, E)$ is a subgraph $T = (V, E')$ such that $T$ is a tree and $E' \subseteq E$. 

A graph is called \emph{chordal} if every cycle of length at least four has a \emph{chord}, an edge connecting two non-consecutive vertices in the cycle.
A graph is called \emph{strongly chordal} if it is chordal and every even cycle of length at least six has an odd chord, that is, a chord that divides the cycle into two paths of odd length.
A graph $G = (V, E)$ is called a \emph{split graph} if its vertex set $V$ can be partitioned into two disjoint sets $C$ and $I$, where $C$ induces a clique and $I$ induces an independent set.
A \emph{block-split graph} is a split graph $G = (C, I, E)$ such that each vertex in the independent set $I$ has degree at most one. Every vertex in $I$ is isolated or adjacent to exactly one vertex in $C$. Equivalently, the subgraph induced by $C \cup I$ contains only pendant edges between $I$ and $C$, and all remaining edges lie within the clique $C$.

Given a graph $G = (V, E)$, a \emph{Hamiltonian path} is a path that visits each vertex of $V$ exactly once. 
% ono: s-t Hamiltonian path も定義しておいた方が良い．
In particular, we consider the \emph{$s$-$t$ Hamiltonian path}, which is a Hamiltonian path that starts at a designated vertex $s$ and ends at another designated vertex $t$.
The \emph{Hamiltonian path problem} asks whether a given graph contains a Hamiltonian path between two specified vertices.
It is well known that the ($s$-$t$) Hamiltonian path problem is NP-complete in general graphs. Moreover, the hardness persists even when the input graph is restricted to bipartite, chordal, or split graphs.

A set $V’ \subseteq V$ is called a \emph{dominating set} of $G$ if for all $u\in V\setminus V'$ there is a $v\in V’$ such that $u\in N(v)$. If we require, in addition, that the subgraph induced by $V’$ in $G$ be a clique, then the corresponding set is called a \emph{dominating clique}. 

\medskip

%A graph is said to be \emph{connected} if a path exists between every pair of vertices. The \emph{degree} of a vertex $v \in V$, denoted by $d(v)$, is the number of vertices adjacent to $v$. The \emph{diameter} of a connected graph is the maximum distance between any pair of vertices.

%\begin{figure}[htbp]
%\begin{center}
%\includegraphics[width=120mm]{splitgraph.pdf}
%\caption{Block Split Graph and Non Block Split Graph}
%\end{center}
%\end{figure}

%\subsection{HIST with diameter 2}

Before concluding this section, we see the proposition used throughout the paper. 
\begin{proposition}\label{prop:subgraph}
For a graph $G=(V,E)$, $G$ contains a HIST if there exists a spanning subgraph $G'$ of $G$ such that $G'$ has a HIST.   
\end{proposition}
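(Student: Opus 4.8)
The plan is to argue that the very HIST witnessing the hypothesis for $G'$ already serves as a HIST for $G$, so essentially nothing needs to be constructed. First I would unpack the definition of a spanning subgraph: by assumption $G'=(V,E')$ shares the full vertex set $V$ with $G$ and satisfies $E'\subseteq E$. Let $T$ be a HIST of $G'$, that is, a spanning tree of $G'$ containing no vertex of degree two.

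Next I would verify that $T$ is itself a legitimate subgraph of $G$. Since $T$ is a spanning tree of $G'$, its vertex set is $V(G')=V$ and its edge set is contained in $E'\subseteq E$; hence $T$ is a tree on the full vertex set $V$ using only edges that are present in $G$. This immediately makes $T$ a spanning tree of $G$ as well.

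The remaining point is that homeomorphic irreducibility is preserved under passing to the larger graph. The key observation is that the degree $d_T(v)$ of any vertex $v$ is computed entirely within $T$, and is therefore unaffected by whether $T$ is regarded as sitting inside $G'$ or inside $G$. Consequently $T$ still has no degree-2 vertex when viewed as a subgraph of $G$, so $T$ is a HIST of $G$, and we are done.

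I do not anticipate any genuine obstacle: the claim is essentially a reformulation of the fact that being a HIST is an \emph{intrinsic} property of a tree, combined with the monotonicity of edge sets when moving from a spanning subgraph to the graph itself. The only subtlety worth stating explicitly is that vertex degrees must be measured inside the tree $T$ rather than in the ambient graph, which is precisely the reason why adding the edges of $E\setminus E'$ to pass from $G'$ to $G$ cannot create a new degree-2 vertex or otherwise spoil the witness.
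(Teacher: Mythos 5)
Your proof is correct and follows exactly the intended reasoning: the paper states this proposition without any proof, treating as self-evident precisely the observation you make, namely that a HIST $T$ of the spanning subgraph $G'$ is already a spanning tree of $G$ (since $V(G')=V$ and $E(T)\subseteq E'\subseteq E$) and that degrees measured inside $T$ are unaffected by the ambient graph. Nothing further is needed.
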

In the following sections, we repeatedly use the argument, based on \Cref{prop:subgraph}, that to prove $G$ admits a HIST, it suffices to find an appropriate subgraph $G'$ that has a HIST. In some cases, we explicitly mention this step, while in others, we simply present a subgraph with a HIST without further explanation. 

\subsection{Finding a HIST of a graph with diameter 2}
As a basic result, we here see that finding a HIST of a graph with diameter at most two is done in polynomial time. 
The case of diameter one is trivial. Since a graph with diameter one is a complete graph, it is easy to see that a HIST exists if and only if its order is not three.

We now consider the case of diameter two. 
For graphs with diameter two and at least 10 vertices, Shan and Tsuchiya~\cite{DBLP:journals/jgt/ShanT23} presented a complete characterization of graphs having a HIST, which is explained below.  

Let $(p_1,\ldots,p_k)$ be a vector consisting of $k$ positive integers, and for $(p_1,\ldots,p_k)$, we define a graph $A(p_1,\ldots,p_k)$ as follows:
\begin{align*}
 V(A(p_1,\ldots,p_k)) & =\{x,y_1,\ldots,y_k\} \cup \bigcup_{i=1}^k U_i, \text{where } U_i = \{u^{(i)}_j \mid j=1,\ldots,p_i\}, \\ 
 E(A(p_1,\ldots,p_k)) & = \bigcup_{i=1}^k \left\{\{x,u\}, \{y_i,u\} \ \middle| \ u\in U_i \right\} \cup \left\{\{y_a,y_b\} \in \binom{\{y_1,\ldots,y_k\}}{2} \right\}. 
\end{align*}
Namely, in $A(p_1,\ldots,p_k)$, $\{y_1,\ldots,y_k\}$ forms a clique with order $k$, and each $G[U_i \cup \{x,y_i\}]$ forms a complete bipartite graph, and each vertices in $U_i$ for each $i$ are twins. Let $\mathcal{A}=\bigcup_{k\in \mathbb{Z}^+}\{A(p_1,\ldots,p_k)\mid p_i \in \mathbb{Z}^+ \text{for each }i\}$. Furthermore, we define $B_n$ by $V(B_n)=V(A(2,n-5))$ and $E(B_n)=E(A(2,n-5))\cup \{\{u^{(1)}_1,u^{(1)}_2\}\}$. The following lemma is known.  

\begin{lemma}[{\cite[Theorem 1]{DBLP:journals/jgt/ShanT23}}]
Let $G$ be a graph of order $n \geq 10$ and diameter $2$. Then $G$ contains a HIST if and only if $G \notin \mathcal{A} \cup \{{B}_n\}$.
\end{lemma}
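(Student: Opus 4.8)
The plan is to prove the two directions separately: that every graph in $\mathcal{A}\cup\{B_n\}$ fails to have a HIST (necessity), and that every diameter-$2$ graph of order $n\ge 10$ outside this family does admit one (sufficiency). Throughout I would exploit the defining feature of diameter $2$ — any two non-adjacent vertices share a common neighbor — together with the one structural constraint imposed by a HIST, namely that every vertex is either a leaf or has tree-degree at least three.

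For necessity, the key is a degree-counting (twin) argument. In $A(p_1,\dots,p_k)$ every vertex $u\in U_i$ satisfies $N(u)=\{x,y_i\}$, hence $d(u)=2$, while $x$ is adjacent to none of $y_1,\dots,y_k$. In any HIST such a $u$ cannot have tree-degree $3$ (it has only two neighbours) and cannot have tree-degree $2$, so it must be a leaf. But then the only way to connect $x$ to the clique $\{y_1,\dots,y_k\}$ runs through some $u\in U_i$, which would make $u$ an internal vertex of degree $2$ — a contradiction. Hence no graph in $\mathcal{A}$ has a HIST. For $B_n$ I would argue analogously: the $n-5$ vertices of $U_2$ still have degree $2$ and are forced to be leaves, so $x$ and $y_2$ must ``collect'' them, and the single extra edge inside $U_1$ offers only limited flexibility; a short finite case analysis on the mutual tree-connections of $x,y_1,y_2$ and the two vertices of $U_1$ shows that some vertex is unavoidably left with degree exactly $2$.

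For sufficiency — the substantial direction — I would build the HIST around a dominating structure, such as a dominating clique or a dominating edge, which diameter-$2$ graphs tend to possess; by \Cref{prop:subgraph} it suffices to exhibit a spanning subgraph that has a HIST, so I retain the freedom to discard edges. Fixing a vertex $x$ of maximum degree and setting $R=V\setminus N[x]$, diameter $2$ guarantees that each vertex of $R$ has a neighbour in $N(x)$. The goal is to designate a connected backbone inside $N[x]$ all of whose internal vertices reach degree at least three, and then to hang the remaining vertices — in particular all of $R$ — as leaves, distributing them among their available backbone neighbours so that no backbone vertex ends with degree exactly two. The freedom to route each vertex of $R$ to one of several neighbours in $N(x)$, together with the abundance of edges forced by diameter $2$ and by $n\ge 10$, is what makes this balancing feasible; twins are handled uniformly, since they are interchangeable as leaves.

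The main obstacle, and the heart of the theorem, is precisely this balancing step: guaranteeing that every vertex forced to be internal can actually attain tree-degree at least three. This forcing fails only when the graph is so ``bipartite-like'' around $x$ that a vertex bridging $x$ to the rest must have degree two, which is exactly the structural signature of $A(p_1,\dots,p_k)$ and of $B_n$. The delicate part is therefore the case analysis establishing that these are the only obstructions: whenever $G\notin\mathcal{A}\cup\{B_n\}$ one can locate an additional edge or an additional common neighbour that reroutes around every would-be degree-$2$ vertex, with the threshold $n\ge 10$ serving to rule out the sporadic small counterexamples.
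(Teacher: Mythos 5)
First, note that the paper does not prove this lemma at all: it is imported verbatim as Theorem~1 of Shan and Tsuchiya \cite{DBLP:journals/jgt/ShanT23}, and the paper's only original contribution around it is the polynomial-time isomorphism check that turns the characterization into \Cref{thm:diameter2}. So there is no in-paper proof to match your attempt against; what you are attempting is the external result itself, which constitutes the bulk of a separate research paper.

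Your necessity direction is essentially correct and complete. For $A(p_1,\ldots,p_k)$, the observation that every $u\in U_i$ has degree $2$ in $G$, hence must be a leaf in any HIST, while every tree path from $x$ to the clique $\{y_1,\ldots,y_k\}$ must pass through such a $u$ as an internal vertex, is exactly the right contradiction. For $B_n$ the extra edge $\{u^{(1)}_1,u^{(1)}_2\}$ does require the finite case analysis you allude to (the vertices of $U_2$ are still forced leaves, $y_2$ must hang off $y_1$, $y_1$ is then forced to take both $u^{(1)}_1$ and $u^{(1)}_2$ as tree neighbours, and connecting $x$ creates either a degree-$2$ vertex or a cycle), and that analysis does go through; stating it as ``a short finite case analysis'' is a sketch, but an honest and fixable one.

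The genuine gap is the sufficiency direction, which is the entire substance of the theorem. Your text for this direction never produces a construction or a case analysis; its key sentence --- ``whenever $G\notin\mathcal{A}\cup\{B_n\}$ one can locate an additional edge or an additional common neighbour that reroutes around every would-be degree-$2$ vertex'' --- is a restatement of the claim to be proved, not an argument for it. Moreover, the structural crutch you propose to lean on is shaky: diameter-$2$ graphs need not contain a dominating edge or dominating clique (the Petersen graph has diameter $2$, is vertex-transitive with degree $3$, and has neither), so the ``dominating structure'' cannot be assumed; only the weaker fact that every vertex outside $N[x]$ has a neighbour in $N(x)$ is available, and turning that into a balanced backbone where every internal vertex reaches degree $3$ is precisely the hard combinatorial work that Shan and Tsuchiya's paper spends many pages on. As written, your proposal proves one (easy) implication and defers the other to an unexecuted plan, so it cannot be accepted as a proof of the lemma; if the goal is merely to use the lemma, the correct move is the one the paper makes, namely citing it.
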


This lemma implies that for a given $G$, we can determine the existence of a HIST by checking the isomorphism of a graph in $\mathcal{A} \cup \{{B}_n\}$, which is easy indeed. 
For example, we consider checking if a given graph $G=(V,E)$ is isomorphic to a graph forming $A(p_1,\ldots,p_k)$ for a specific $k\ge 3$. Then, we first let $U$ be the set of vertices with degree 2. If $U$ is not an independent set of $G$, then $G$ is not isomorphic to any graph supposed. 
Check if $|\{x\mid N(x)=U\}|\neq 1$. If yes, $G$ is not isomorphic to any graph supposed, and otherwise, let $x$ be the unique vertex such that $N(x)=U$. We then let $Y=V\setminus (U\cup \{x\})$, and if $Y$ forms a clique with order $k$, 
$G$ is isomorphic to a graph forming $A(p_1,\ldots,p_k)$, 
where $p_i = d(y_i) - (k-1)$ for $y_i \in Y$. Otherwise, again $G$ is not isomorphic to any graph supposed. For the other cases, i.e., $A(p_1,\ldots,p_k)$ with $k=1, 2$ and $B_{n}$, we can easily check the isomorphism by modified procedures. 
These procedures are done in polynomial time. 

When we cannot use the lemma, that is, in the case where the number of vertices is at most 9, the existence of a HIST can be determined in constant time. Thus, we have the following theorem: 

\begin{theorem}\label{thm:diameter2}
For a graph $G$ with diameter at most 2, the existence of a HIST can be determined in polynomial time. 
\end{theorem}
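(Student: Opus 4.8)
The plan is to reduce the diameter-2 case to a finite case-check combined with the structural characterization of Shan and Tsuchiya quoted in the preceding lemma. Since the diameter-at-most-1 case is already disposed of in the text (a complete graph on $n$ vertices has a HIST iff $n \neq 3$, checkable trivially), the work is entirely in the diameter-2 case, and there the argument splits cleanly at the threshold $n = 10$.

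First I would handle the regime $n \geq 10$. Here the cited lemma tells us that $G$ has a HIST if and only if $G \notin \mathcal{A} \cup \{B_n\}$, so the whole task is to decide membership of $G$ in this family of ``bad'' graphs in polynomial time. The key observation is that each forbidden graph $A(p_1,\ldots,p_k)$ has a rigid structure that can be recovered from easily computable invariants: the low-degree vertices $U = \bigcup_i U_i$ (the twins, which are the degree-$2$ vertices when $k\ge 2$), the apex $x$ characterized as the unique vertex whose neighborhood equals $U$, and the clique $Y = V \setminus (U \cup \{x\})$. I would follow exactly the recognition procedure sketched in the excerpt: test whether $U$ (the set of degree-$2$ vertices) is independent, whether there is a unique $x$ with $N(x) = U$, and whether $Y$ induces a clique; if all these hold, recover $p_i = d(y_i) - (k-1)$ and confirm the bipartite-twin adjacency pattern between $U$ and $Y$. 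Each step is a straightforward scan of the adjacency structure and runs in polynomial time. The small-$k$ exceptions ($k = 1, 2$), where the degree bookkeeping differs (e.g.\ for $k=2$ the clique $Y$ degenerates to an edge), and the single graph $B_n$ (which is $A(2, n-5)$ with one extra edge inside $U_1$) are handled by analogous constant-description checks.

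For the complementary regime $n \leq 9$, the input graph has bounded size, so the existence of a HIST can be decided by brute force over all spanning trees (or all subsets of edges), which is a constant-time computation independent of the rest of the input. Combining the two regimes, and recalling that computing the diameter and the degree sequence is polynomial, yields a polynomial-time decision procedure overall, establishing the theorem.

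I do not expect a genuine mathematical obstacle here, since the hard structural content is entirely supplied by the quoted Shan–Tsuchiya lemma; the only care required is bookkeeping. The main subtlety is making the isomorphism test against the family $\mathcal{A}$ both \emph{complete} and \emph{correct}: one must ensure the recognition procedure accepts \emph{exactly} the graphs of the prescribed form and rejects everything else, including verifying that the vertices of each $U_i$ are genuine twins with the correct common neighborhood $\{x, y_i\}$ rather than merely having the right degree. Getting the degenerate parameter cases ($k \in \{1,2\}$, and the distinction between degree-$2$ vertices and the extra edge in $B_n$) right is the part most prone to error, but each is a finite, mechanical verification rather than a conceptual difficulty.
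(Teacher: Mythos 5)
Your proposal is correct and follows essentially the same route as the paper: handle $n\le 9$ by brute force, and for $n\ge 10$ decide membership in $\mathcal{A}\cup\{B_n\}$ via the degree-based recognition procedure (independent set $U$ of degree-2 vertices, unique apex $x$ with $N(x)=U$, clique $Y$, recovery of the $p_i$, and separate handling of $k\in\{1,2\}$ and $B_n$). The only difference is that you make explicit the completeness check that the $U_i$ are genuine twins with neighborhood $\{x,y_i\}$, which the paper leaves implicit.
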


\section{Chordal Graphs with Diameter 3}
In this section, we provide a characterization of chordal graphs with diameter 3 that have a HIST. To this end, we first characterize block split graphs that admit a HIST and then extend this characterization to include split graphs and chordal graphs with diameter 3 that admit a HIST. Note that split graphs are one of the most well-studied subclasses of chordal graphs, and they have a diameter of at most 3. We start with block-split graphs. 
%and then show that the existence of a HIST of a split graph can be reduced to that of a block split graph obtained from the original split graph. We first consider block-split graphs. 
\subsection{Block-split graph having a HIST}
Let \( G = (C, I, E) \) be a block-split graph. A vertex \( u \in C \) is called  \emph{good} if its degree satisfies \( d(u) \neq |C| \), \emph{bad} otherwise. A good vertex has either 0 or at least 2 neighbors in \( I \), each of which is a pendant vertex (i.e., has degree 1). A bad vertex has exactly one pendant neighbor in $I$. 
Note that every vertex in $C$ is categorized as good or bad, and the terms ``good'' and ``bad'' are used only for vertices in $C$. 
The following lemma characterizes block-split graphs containing a HIST.

\begin{theorem}\label{thm:blocksplit}
    Let \( G = (C, I, E) \) be a block-split graph that is not a tree and not isomorphic to \( K_3 \). Then \( G \) admits a HIST if and only if it contains at least two good vertices.
\end{theorem}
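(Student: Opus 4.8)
The plan is to reduce the existence of a HIST to a purely ``clique-internal'' spanning-tree problem. The key observation is that in any spanning tree $T$ of $G$ (which we assume connected, as is necessary for a spanning tree to exist), every vertex of $I$ is a pendant vertex, so it is a leaf of $T$; consequently no $T$-path between two vertices of $C$ can pass through $I$, and therefore the restriction $T[C]$ is a spanning tree of the clique $C$. Conversely, any spanning tree of $C$ together with all pendant edges to $I$ yields a spanning tree of $G$. Thus, writing $p(u)=|N(u)\cap I|$ for the number of pendant neighbours of $u\in C$, a HIST of $G$ corresponds exactly to a spanning tree $T_C$ of $C$ in which every $u\in C$ satisfies $\deg_{T_C}(u)+p(u)\neq 2$. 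I would first record this equivalence and then translate the three vertex types: a good vertex with $p(u)=0$ needs $\deg_{T_C}(u)\neq 2$; a bad vertex ($p(u)=1$) needs $\deg_{T_C}(u)\neq 1$, i.e.\ $\deg_{T_C}(u)\ge 2$; and a good vertex with $p(u)\ge 2$ imposes no constraint, since $\deg_{T_C}(u)\ge 1$ already forces its total degree to be at least $3$.

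For the forward direction, note that since $G$ is not a tree the clique satisfies $|C|\ge 3$ (a clique of size at least $3$ contains a triangle), so $T_C$ has at least two leaves. A leaf has $\deg_{T_C}=1$, so by the constraint above it cannot be bad; hence both leaves are good, giving at least two good vertices. This is the easy direction.

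For the converse I would construct $T_C$ explicitly, distinguishing on the number $m$ of bad vertices, with the guiding principle that all good vertices are made leaves (degree $1$, always admissible) while the bad vertices form an internal skeleton of degree at least $2$. If $m=0$, a star centred at a good vertex works when $|C|\ge 4$; the delicate point is $|C|=3$, where the tree is a path whose central vertex has degree $2$, so I must place a good vertex with $p\ge 2$ in the centre, and such a vertex exists because otherwise $G\cong K_3$, which is excluded. If $m=1$, a star centred at the unique bad vertex gives it degree $|C|-1\ge 2$ and leaves every good vertex pendant. If $m\ge 2$, I arrange the bad vertices on a path $b_1,\dots,b_m$, attach one good vertex to each end, and hang any remaining good vertices as leaves on $b_1$; then every bad vertex has degree at least $2$ and every good vertex is a leaf.

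The main obstacle is the constraint for good vertices with $p=0$, namely $\deg_{T_C}\neq 2$: this is exactly what defeats the naive Hamiltonian-path construction on $C$, since interior path vertices have degree $2$. It is what forces the ``good vertices as leaves'' design and, together with the $K_3$ exclusion, the careful treatment of the boundary case $|C|=3$.
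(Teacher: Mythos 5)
Your proof is correct, and it diverges from the paper's in a way worth recording. The sufficiency direction is essentially identical to the paper's: the paper also handles the no-bad-vertex case with a star (treating $|C|=3$ via a centre having at least two pendant neighbours, which exists because $G\not\cong K_3$), and when bad vertices exist it builds exactly your skeleton --- a path through all bad vertices with two good vertices at its ends and the remaining good vertices hung on the first bad vertex. The genuine difference is in the necessity direction. The paper argues by contradiction: assuming a HIST and at most one good vertex, it takes a longest path of the HIST and performs a case analysis on whether its endpoints are pendant vertices of $I$ or good vertices, reaching a contradiction in each subcase. You instead first prove a clean correspondence --- since every vertex of $I$ is pendant, any spanning tree of $G$ must contain all pendant edges and its restriction $T[C]$ is a spanning tree of the clique, so HISTs of $G$ correspond exactly to spanning trees $T_C$ of $C$ with $\deg_{T_C}(u)+p(u)\neq 2$ for every $u\in C$, where $p(u)=|N(u)\cap I|$ --- and then necessity falls out in two lines: $T_C$ has at least two leaves, and a leaf cannot be bad because that would give total degree $2$. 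This is shorter, avoids the longest-path case analysis entirely, and the same leaf-counting argument generalizes verbatim to forests (each tree component of $T[C]$ with at least two vertices contributes two good leaves), which is precisely what the paper asserts without proof in \Cref{thm:blocksplit2} for HISFs with $k$ components; your reduction thus buys both brevity and the extension, while the paper's direct argument on $G$ offers no compensating advantage here.
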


\begin{proof}
Since \( G \) is not a tree, it follows that \( |C| \geq 3 \).

\medskip 

\noindent
(\(\Rightarrow\)) We prove this by contradiction. Assume that \( G \) has a HIST and at most one good vertex, and \( |C| \geq 3 \); thus, at least two bad vertices exist.  
Note that each bad vertex is adjacent to exactly one pendant vertex, which implies that it should appear as an internal vertex in a HIST.  
%In any HIST of \( G \), such a vertex must be adjacent to at least two internal vertices (one child and one parent), violating the degree condition for HIST unless it appears as an internal vertex of a path.
Consider a path $(s, t)$ with the longest length in the HIST, where $s$ and $t$ are leaves there. 
This implies that neither $s$ nor $t$ is bad. We consider two cases: (i) neither $s$ nor $t$ is good (i.e., both $s$ and $t$ are pendant vertices in $I$), and (ii) $s$ is a pendant vertex in $I$ and $t$ is good. We first consider (i). Consider the next vertex \( s' \) and $t'$ of the leaves \( s \) and $t$ on the $(s,t)$-path, respectively. Since $s'$ and $t'$ are in $C$, the cases are further divided into (i-1) $s'$ is bad (resp., good) and $t'$ is good (resp., bad), (i-2) $s'$ and $t'$ are bad, and (i-3) $s'(=t')$ is good. (i-1) If $s'$ is bad, $s'$ connects at least two vertices other than $s$ in the HIST, 
which means that $s'$ is adjacent to a vertex $s''$ not in the $s$-$t$ path in the HIST. 
Since the $s$-$t$ path is the longest in the HIST, $s''$ should be a good vertex without a pendant vertex, but it contradicts that $t'$ is good. (i-2) We consider $s''$ and $t''$ adjacent to vertex $s'$ and $t'$ not on the $s$-$t$ path in the HIST as the argument of (i-1). Then, $s''$ and $t''$ are different but must be good. This contradicts the uniqueness of a good vertex. (i-3) Since $(s,t)$-path is the longest, the HIST forms a star, which contradicts the existence of a bad vertex. We next consider (ii). By a similar argument to (i-1), $s'$ has a leaf $s''$, the unique good vertex, on the HIST, which implies $s''=t$. That is, the HIST forms a star, which again leads to the contradiction as (i-3). 

\medskip
\noindent
$( \Leftarrow )$ Suppose $G$ contains at least two good vertices. Then, we can construct a HIST as follows.

We first consider the case where no bad vertex exists; all the vertices in $C$ are good. We further divide the cases according to $|C|$. If $|C|=3$, $C$ contains at least one vertex having at least two pendant vertices, because $G$ is not $K_3$. 
We then fix such a vertex as the root, attach the other vertices in $C$ as the root's children, and attach every pendant vertex to its unique neighbor. The resulting spanning tree is a HIST indeed, because the root has four children and the other vertices in $C$ have no child or at least two children. 
If $|C|\ge 4$, each vertex in $C$ has degree at least 3. Then, we arbitrarily select a vertex in $C$ as the root, attach the other vertices in $C$ as the root's children, and attach every pendant vertex to its unique neighbor. The resulting spanning tree is a HIST again. Thus, if no bad vertex exists, $G$ has at least three good vertices and a HIST. 

We then consider the case where at least one bad vertex exists. 
Since \( G \) contains at least two good vertices by assumption, let \( s \) and \( t \) be two such vertices. We construct a path starting at \( s \), ending at \( t \), and passing through all bad vertices. Note that the path consists of at least three vertices. Let \( r \) be the neighbor of \( s \) on this path, and choose it as the root. Then, connect all the remaining good vertices except \( s \) and \( t \) directly to \( r \). Every pendant vertex is attached to its unique neighbor. Then, this tree is a HIST. Indeed, each bad vertex other than $r$ appears as an internal vertex of the \( s \)-\( t \) path, and thus its degree becomes exactly 3. The degree of $r$ is at least 3. The vertices \( s \) and \( t \) connect to their parent and may have 0 or at least 2 pendant neighbors, resulting in degree 1 or 3 or more. The remaining good vertices have degree 1 or at least 3, depending on the number of pendant neighbors. Thus, all vertices in the tree have degree 1 or at least 3.
\qed
\end{proof}

Since the existence or non-existence of a HIST is trivial for trees and \( K_3 \), we obtain the following corollary. 
\begin{corollary}
For a block split graph $G = (C, I, E)$, we can determine whether $G$ has a HIST in linear time.
\end{corollary}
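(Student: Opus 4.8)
The plan is to convert the characterization of \Cref{thm:blocksplit} directly into a decision procedure and verify that each step runs in linear time. First I would dispose of the two degenerate cases excluded from the theorem. If $G$ is a tree, then $G$ is its own unique spanning tree, so $G$ admits a HIST if and only if $G$ has no degree-$2$ vertex; this is decided by a single scan of the degree sequence. Whether $G$ is a tree can itself be tested in linear time, for instance by checking $|E| = |V| - 1$ together with connectivity. The case $G \cong K_3$ is settled in constant time and returns ``no,'' since every spanning tree of $K_3$ is a path containing a degree-$2$ vertex.

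For the remaining main case, where $G$ is neither a tree nor isomorphic to $K_3$, I would simply count the good vertices of $C$ and invoke \Cref{thm:blocksplit}, which asserts that $G$ has a HIST if and only if at least two vertices of $C$ are good. Detecting good vertices is inexpensive: because $C$ induces a clique, each $u \in C$ satisfies $d(u) = (|C| - 1) + p(u)$, where $p(u)$ denotes the number of pendant neighbors of $u$ in $I$, so $u$ is good precisely when $d(u) \neq |C|$, i.e. when $p(u) \neq 1$. Computing all vertex degrees and the value $|C|$ takes linear time, after which a single pass over $C$ tallies the good vertices and compares the count against $2$.

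The one point requiring care is that the procedure presupposes the split partition $(C, I)$; here it is supplied as part of the input, so no recognition step is needed. Since each of the steps above runs in linear time and the characterization of \Cref{thm:blocksplit} is applied verbatim, the overall decision procedure is linear. I do not expect a genuine obstacle: the only delicate aspect is the correct handling of the boundary cases (trees and $K_3$) that lie outside the hypotheses of \Cref{thm:blocksplit}, which is exactly why they are dispatched separately at the outset before the main counting argument is applied.
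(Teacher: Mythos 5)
Your proposal is correct and matches the paper's (implicit) argument: the paper derives this corollary directly from \Cref{thm:blocksplit}, noting only that the excluded cases (trees and $K_3$) are trivial, which is exactly your structure. Your explicit details — testing treehood, rejecting $K_3$, and counting good vertices via the degree test $d(u) \neq |C|$ in a single linear pass — are a faithful and correct elaboration of that same route.
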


In a graph $G=(V,E)$, a subgraph $F=(V,E')$ is called a \emph{homeomorphically irreducible spanning forest} (HISF) if $F$ contains neither a cycle nor a vertex with degree two. 
Then the necessary condition of $G$ having a HIST in \Cref{thm:blocksplit} is easily extended to a HISF with $k$ connected components.  

\begin{theorem}\label{thm:blocksplit2}
    Let \( G = (C, I, E) \) be a block-split graph that is not a tree and not isomorphic to \( K_3 \). If \( G \) admits a HISF with $k$ connected components, then it contains at least $2k$ good vertices.
\end{theorem}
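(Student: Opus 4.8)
The plan is to reduce the statement to a per-component claim and then exploit vertex-disjointness. Write $F$ for the given HISF and let $T_1,\dots,T_k$ be its connected components; each $T_i$ is a tree containing no vertex of degree two. Since the $T_i$ are pairwise vertex-disjoint, it suffices to exhibit, for every $i$, two \emph{distinct} good vertices that can be charged to $T_i$. Summing over the $k$ components then yields at least $2k$ good vertices in total, because a good vertex charged to $T_i$ cannot also belong to $T_j$ for $j\neq i$. Thus the whole task is local: produce two good vertices per component.

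To locate two good vertices inside a fixed component $T_i$, I would re-run the longest-path argument from the forward direction of \Cref{thm:blocksplit}, now localized to $T_i$. Concretely, take a path $(s,\dots,t)$ of maximum length in $T_i$; by maximality $s$ and $t$ are leaves of $T_i$. In the connected case of \Cref{thm:blocksplit}, the key fact was that a bad vertex, being adjacent to its unique pendant in $I$, is forced to have degree at least three and hence is internal, so that the endpoints of a longest path are never bad. Granting the analogue locally, each endpoint is either a good vertex of $C$ or a pendant vertex of $I$, and the same case distinction—(i) both endpoints in $I$, and (ii) one endpoint in $I$ and one good—applies: inspecting the second vertices $s',t'$ on the path and their off-path branches, maximality of the path forces those branches to terminate immediately at leaves, and those leaves must then be good vertices lying in $T_i$. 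This is exactly the mechanism by which cases (i-1), (i-2) of \Cref{thm:blocksplit} manufacture good vertices, transplanted into a single component.

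The main obstacle—and the reason this extension is more delicate than the single-component case—is that the step "a bad vertex must be internal" genuinely relied on global connectivity. In a forest a bad vertex together with its unique pendant can form a separate two-vertex component, so a bad vertex may appear as a leaf; likewise the contradictions obtained in cases (i-3) and (ii) of \Cref{thm:blocksplit} invoked connectivity of the whole graph to rule out a star-shaped configuration, an appeal that is no longer available per component. I would therefore isolate the short or degenerate components as explicit base cases: a component that is a single edge, or one whose longest path has length at most two, must be treated directly, showing that it still contributes (or can be charged) two good vertices, and in particular that such a configuration cannot be realized while spending fewer than two good vertices on it. Once these base cases are dispatched and the analogue of "endpoints are not bad" is secured for the remaining components, the disjointness observation of the first paragraph assembles the local bounds into the desired global bound of $2k$ good vertices.
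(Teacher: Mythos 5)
Your reduction to a per-component claim, with the componentwise re-use of \Cref{thm:blocksplit}, is the right skeleton (the paper itself gives no proof of this theorem, asserting only that it is an easy extension of \Cref{thm:blocksplit}); the problem is the deferred base cases, which cannot be dispatched. The degenerate configurations you flag are genuinely realizable, and realizable while spending \emph{fewer} than two good vertices. A bad vertex $u$ together with its unique pendant $p$ is a legitimate two-vertex HISF component (both vertices have degree one; the paper's definition forbids only cycles and degree-two vertices), and it contains zero good vertices. Worse, a good vertex with at least three pendant neighbours, together with those pendants, is a legitimate star component containing exactly one good vertex, so the failure is not confined to short components. Concretely, let $C=\{a,b,c\}$ induce a triangle and attach exactly three pendants to each of $a,b,c$: this block-split graph is not a tree and not $K_3$, all three clique vertices are good, and the nine pendant edges form a HISF with $k=3$ components (three copies of $K_{1,3}$, no cycle, no degree-two vertex), yet $G$ has only $3<2k=6$ good vertices. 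Likewise, with $C=\{a,b,c,d\}$, two pendants on each of $a,b$ and one on each of $c,d$, the single edge $\{c,c_1\}$ together with the tree in which $d$ is adjacent to $a$, $b$, and its own pendant, and $a,b$ carry their pendants, is a HISF with $k=2$ components, while $G$ has only two good vertices. So the statement as literally written is false, and no completion of your argument can exist; the specific misstep is your assumption that a degenerate component ``cannot be realized while spending fewer than two good vertices on it.''

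What \emph{is} true---and is all that the paper's sole application of this theorem (case (ii) in the proof of \Cref{thm:split}) requires---is the statement under the added hypothesis that the vertex set of every component induces a subgraph of $G$ that is neither a tree nor $K_3$. Under that hypothesis your plan closes immediately, and without re-running the longest-path analysis: since no component is a singleton, every pendant vertex of $I$ is attached in $F$ to its unique neighbour and hence lies in that neighbour's component, so for each clique vertex the number of pendant neighbours inside its component equals the number in $G$; thus ``good in $G[V(T_i)]$'' coincides with ``good in $G$.'' Each component $T_i$ is then a HIST of the connected block-split graph $G[V(T_i)]$, so \Cref{thm:blocksplit} applied to $G[V(T_i)]$ yields two good vertices inside each component, and vertex-disjointness of the components sums these to $2k$. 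In short: keep your first paragraph, but the degenerate components must be excluded by strengthening the hypothesis of the theorem, not dispatched as base cases---as stated, they are counterexamples, not cases.
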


%This corollary can be used to determine whether a given split graph has a HIST.
\subsection{Split graph having a HIST}
%Let \( I_1 = \{ u \in I \mid d(u) = 1 \} \) be the set of pendant vertices. Note that \( I \setminus I_1 \neq \emptyset \) (because if $I \setminus I_1$ is $\emptyset$, $G$ is a block split graph).
%メモ:辺を切って，良頂点が二つ以上のブロックスプリットにできれば自明にOK
Let \( G = (C, I, E) \) be a split graph. By fully utilizing \Cref{thm:blocksplit,thm:blocksplit2}, we can prove the following theorem, which characterizes split graphs without a HIST (and thus, those with one).
\begin{theorem}\label{thm:split} 
Let \( G = (C, I, E) \) be a split graph that is not a block-split graph.
Then, $G$ does not admit a HIST if and only if one of the following holds:
\begin{enumerate}
 \item For all $u \in C$, $|N(u)\cap I|=1$ and $|C|-|I|= 1$. 
 \item Let $U = \{u \in C \mid |N(u) \cap I| = 2\}$. All of the following hold:
 \begin{enumerate}
 \item For all $u \in C$, $|N(u) \cap I| \in \{1,2\}$ and $|U| \geq 2$,
 \item For each $u \in U$, the neighborhoods $N(u)\cap I$ are distinct,
 \item For each $u \in C\setminus U$, the vertex in $N(u)$ is a pendant vertex,
 \item $\left|\bigcap_{u \in U} (N(u)\cap I)\right| = 1$.
 \end{enumerate}
\end{enumerate}
\end{theorem}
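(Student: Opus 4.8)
The plan is to reduce the problem, via \Cref{prop:subgraph}, to the block-split case already settled by \Cref{thm:blocksplit,thm:blocksplit2}. The key device is that any assignment sending each vertex of $I$ to a single neighbor in $C$, while keeping all edges inside the clique $C$, produces a connected block-split spanning subgraph $G'$ of $G$; by \Cref{prop:subgraph} and \Cref{thm:blocksplit}, $G$ admits a HIST whenever some such $G'$ has at least two good vertices. I would first classify the vertices of $C$ according to $|N(u)\cap I|$ into $C_0$ (no neighbor in $I$), $C_1$ (exactly one), and $C_{\ge 2}$ (at least two), and refine $C_1$ into those whose unique $I$-neighbor is a pendant and those whose neighbor has degree at least $2$. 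A vertex of $C_0$ is good in every $G'$; a vertex of $C_1$ with a non-pendant neighbor can be made good by routing that neighbor elsewhere; a vertex of $C_{\ge 2}$ can be made good by assigning it two of its pendants; and a vertex of $C_1$ with a pendant neighbor is forced to receive exactly that pendant, hence is always bad. The two directions then amount to deciding when a pendant assignment producing two good vertices exists.

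For the direction that conditions (1) and (2) preclude a HIST, I would argue by a leaf-counting estimate: a HIST on $n$ vertices has at least $\lceil (n+2)/2\rceil$ leaves, since every internal vertex has degree at least $3$. Under condition (1) the hypotheses force $|C|=|I|+1$, every vertex of $I$ to have degree at most $2$ (hence to be a leaf), and each vertex of $C$ to receive at most one pendant; a short count then shows the tree cannot supply enough leaves. Under condition (2) I would first unfold the four clauses to show that every vertex of $C$ owns a private pendant and that the single shared vertex $w$ is adjacent to exactly the vertices of $U$; consequently every vertex of $C$ is forced to be internal, so the number of leaves is at most $|C|+1<\lceil(n+2)/2\rceil$, a contradiction. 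The subcase in which all of $I$ are leaves is exactly \Cref{thm:blocksplit} (equivalently \Cref{thm:blocksplit2} with $k=1$) applied to the resulting block-split graph, which has only one good vertex; the remaining subcase, where $w$ is internal, is absorbed by the same leaf count.

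For the converse I would prove the contrapositive: assuming $G$ is not block-split and violates both (1) and (2), I construct a pendant assignment with at least two good vertices, whence a HIST by \Cref{thm:blocksplit}. The argument splits on whether $C_{\ge 2}$ is empty. If $C_{\ge 2}=\emptyset$, then since $G$ is not block-split some $I$-vertex has degree at least $2$, so $C_1$ contains a vertex with a non-pendant neighbor; combining it with a vertex of $C_0$ (or, when $C_0=\emptyset$ so that $C=C_1$, using the surjection $C\to I$, whose fibers of size at least $2$ each yield a good vertex) gives two good vertices unless $|C|=|I|+1$, which is precisely condition (1). If $C_{\ge 2}\neq\emptyset$, I would use each failed clause of (2) to exhibit two good vertices: a vertex of $C_0$ or one with three or more $I$-neighbors (failure of (a)), a repeated neighbor-pair onto one vertex of which both its pendants can be loaded (failure of (b)), a non-pendant neighbor of a $C_1$-vertex that can be routed away (failure of (c)), or two $C_{\ge 2}$-vertices whose pairs are not all pinned to a common vertex, so that their pendants can be assigned without conflict (failure of (d)).

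I expect the main obstacle to be this last case analysis for the converse, and in particular matching the four clauses of condition (2) exactly: when the neighbor-pairs of the $C_{\ge 2}$-vertices overlap, I must argue carefully that two good vertices can still be produced unless the pairs intersect in a single common vertex with all remaining neighbors being pendants, which is the tight configuration (2). A secondary difficulty is verifying on the forward direction that allowing an $I$-vertex of degree at least $3$ to be internal never circumvents the leaf-count obstruction, so the block-split reduction loses no generality; small cases (for instance $|C|\le 2$, trees, and $K_3$) must also be isolated and checked by hand so that \Cref{thm:blocksplit} applies to the constructed subgraph.
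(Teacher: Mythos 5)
Your proposal is correct, and its two halves differ in how close they are to the paper. Your converse direction (violating both conditions yields a HIST) is essentially the paper's argument: delete $C$--$I$ edges so that every vertex of $I$ becomes a pendant, obtaining a block-split spanning subgraph with at least two good vertices, then invoke \Cref{prop:subgraph} and \Cref{thm:blocksplit}. As you anticipate, the delicate case is the failure of clause 2(d) when the neighbor-pairs of the degree-two vertices pairwise intersect yet have empty common intersection; the paper settles it by a Helly-type observation that pairwise-intersecting $2$-element sets with empty total intersection must form a triangle $\{v_1,v_2\},\{v_1,v_3\},\{v_2,v_3\}$, forcing $|U|=3$, after which an explicit pendant assignment produces two good vertices --- your sketch needs exactly this step to be completed, but it is a detail, not a change of method. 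Your forward direction, by contrast, is genuinely different from and simpler than the paper's. The paper proves non-existence under condition 2 by cases on whether the common vertex $v^*$ is a leaf or an internal vertex of a hypothetical HIST, using for the internal case a vertex-splitting construction together with the HISF bound of \Cref{thm:blocksplit2}. Your leaf count bypasses all of this: a HIST on $n$ vertices has at least $\lceil (n+2)/2\rceil$ leaves, since $2(n-1)\ge L+3(n-L)$; under condition 2 every vertex of $C$ owns a private pendant (for $u\in U$ the second neighbor $p_u$ is forced to be pendant by clauses 2(b)--(d)) and hence is internal, so leaves are confined to $I$ with $|I|=|C|+1$, while a HIST on $n=2|C|+1$ vertices needs $|C|+2$ leaves; under condition 1 all of $I$ consists of forced leaves and at most one vertex of $C$ can be a leaf, giving $|I|+1$ leaves where $|I|+2$ are needed. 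This is self-contained, needing neither \Cref{thm:blocksplit2} nor the splitting construction, only the observation (which you should state explicitly) that a vertex with a pendant neighbor cannot itself be a leaf in any spanning tree of a graph on more than two vertices. Both directions therefore stand; your counting argument is arguably more transparent than the paper's, while the paper's HISF machinery has the side benefit of being reusable in the diameter-3 characterization (\Cref{thm:d3}), where the same splitting argument is invoked again.
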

%(b)と(c)と二つの条件どちらはいらなそう？本質は(c)?

\begin{proof}
($\Rightarrow$)  
We prove that if $G$ satisfies neither condition \textbf{1} nor condition \textbf{2}, then $G$ admits a HIST.  
We first focus on the possible values of $|N(u)\cap I|$.  
Conditions \textbf{1} and \textbf{2} assume that $|N(u)\cap I| \in \{1,2\}$ for all $u \in C$.  
Hence, we consider the following two cases: (A) There exists a vertex $u$ with $|N(u)\cap I| \notin \{1,2\}$, (B) For all $u \in C$, $|N(u)\cap I| \in \{1,2\}$.

\noindent\textbf{Case (A):}  
First, consider the situation where there exists a vertex $u$ with $|N(u)\cap I| = 0$.  
Two subcases arise:  
(a) There exists a vertex $v \in C$ (distinct from $u$) with $|N(v)\cap I| \neq 1$. In this case, we can delete edges 
incident to neighbours of $v$ in $I$ (if exist) to ensure that such vertices become pendant vertices (this deletion does not affect the degree of $u$).  
By further deleting edges appropriately to make the degree of vertices in $I$ equal to 1, we obtain a block split graph $G'$ with good vertices $u$ and $v$, implying that $G$ admits a HIST. 
(b) If all vertices $v \in C \setminus \{u\}$ satisfy $|N(v)\cap I| = 1$, since $G$ is not a block split graph, there must exist a vertex $w \in I$ with degree at least 2.  
This vertex $w$ must be the unique neighbor in $I$ of its adjacent vertex in $C$.  
By retaining only one edge incident to $w$, another vertex in $C$ will have degree 0 with respect to $I$.  
Further deleting edges to ensure that all remaining vertices in $I$ have degree 1 results in a block split graph $G'$ with at least two good vertices, which contains a HIST by \Cref{thm:blocksplit}.

Next, consider the case where there exists a vertex $u$ with $|N(u)\cap I| \geq 3$.  
Select $u^*$ to maximize $|N(u^*)\cap I|$.  
Two main subcases arise: (c) There exists another vertex $v \in C$ with $|N(v)\cap I| \geq 2$, (d) Otherwise. Case (c) further divides into:
\begin{itemize}
    \item (c-1) $N(v)\cap N(u^*) \cap I = \emptyset$
    \item (c-2) $N(v)\subseteq N(u^*)$
    \item (c-3) $|N(v)\cap N(u^*) \cap I| = 1$
    \item (c-4) $N(v)\not\subseteq N(u^*)$ and $|N(v)\cap N(u^*) \cap I| \geq 2$
\end{itemize}
In all these cases, by making the neighbors in $I$ of $u^*$ and $v$ pendant vertices connected only to $u^*$ or $v$, and by appropriately connecting the remaining vertices in $I$ to $C$ as pendants, we obtain a block split graph with good vertices $u^*$ and $v$, implying the existence of a HIST. We examine each case in order. 

In case (c-1), we construct pendant vertices by retaining all edges connecting \( N(u^*) \cap I \) to \( u^* \) and all edges connecting \( N(v) \cap I \) to \( v \). In case (c-2), we construct pendant vertices by retaining all edges connecting \( N(u^*) \cap I \) to \( u^* \), treating each vertex in \( N(u^*) \cap I \) as a pendant vertex. In case (c-3), we construct pendant vertices by retaining all edges connecting \( N(v) \cap I \) to \( v \) and all edges connecting \( N(u^*) \cap I \setminus N(v) \) to \( u^* \), where \( N(u^*) \cap I \setminus N(v) \) contains at least two vertices.
In case (c-4), we have \( |N(u^*) \cap I| \ge |N(v) \cap I| \ge 3 \), \( |(N(u^*) \setminus N(v)) \cap I| \ge 1 \), and \( |(N(v) \setminus N(u^*)) \cap I| \ge 1 \). Hence, we construct pendant vertices by connecting the vertices in \( (N(u^*) \setminus N(v)) \cap I \) to \( u^* \), the vertices in \( (N(v) \setminus N(u^*)) \cap I \) to \( v \), and at least one vertex in \( N(u^*) \cap N(v) \cap I \) to each of \( u^* \) and \( v \). The remaining vertices in \( (N(u^*) \cup N(v)) \cap I \) can be connected to either \( u^* \) or \( v \) arbitrarily. With this construction, both \( u^* \) and \( v \) become good vertices.

For subcase (d), the argument proceeds similarly to (b) and also yields a HIST.

\smallskip

\noindent
\textbf{Case (B):}  
Let $U = \{ u \in C \mid |N(u)\cap I| = 2\}$.  
First, if $|U|=0$, i.e., all $u \in C$ satisfy $|N(u)\cap I|=1$, then as long as $|C|-|I| \neq 1$ (i.e., condition \textbf{1} does not hold), we can show that $G$ admits a HIST.  
The condition that all $u \in C$ satisfy $|N(u)\cap I|=1$ implies $|C| = \sum_{v \in I} d(v)$. 
By the connectivity of the split graph, we have \( \bigcup_{u \in C} N(u) \cap I = I \).  
Since \( |N(u) \cap I| = 1 \) holds for every \( u \in C \), it follows that
\[
|I|=\left| \bigcup_{u \in C} N(u) \cap I \right| \leq \sum_{u \in C} |N(u) \cap I| =\sum_{u \in C} 1 = |C|.
\]
In other words, if \( |C| - |I| \neq 1 \), then either \( |C| - |I| = 0 \) or \( |C| - |I| \geq 2 \) holds.

When $0=|C| - |I| = \sum_{v \in I} d(v) - |I|$, we have \( d(v) = 1 \) for every \( v \in I \).  In this case, the graph becomes a block split graph, which is not allowed.  
Thus, we now focus on the case where \( |C| - |I| \geq 2 \).

Since the equation  
\[
|C| = \sum_{v \in I} d(v) = \sum_{v \in I} 1 + \sum_{v \in I} (d(v) - 1) = |I|  + \sum_{v \in I} (d(v) - 1),  
\]
holds, this situation implies that the set \( I \) must contain either at least two vertices of degree at least two, or at least one vertex of degree at least three.  
In either case, by deleting all but one edge incident to each of these vertices on the \( C \) side, we can construct a block split graph containing two good vertices, which admits a HIST.  

When $|U| \geq 1$, the remaining cases (i) $|U| = 1$, corresponding condition \textbf{2.a}, (ii) $\exists u, u' \in U$ with $N(u)\cap I = N(u')\cap I$, corresponding \textbf{2.b}, (iii) $\exists u \in U, u' \in C\setminus U$ with $N(u')\subseteq N(u)$, corresponding condition \textbf{2.c}, and (iv) $\left|\bigcap_{u \in U} (N(u)\cap I)\right| = 0$, corresponding condition \textbf{2.d} (excluding the case already covered by (ii)) can all be handled by appropriate edge deletions to construct a block split graph with at least two good vertices, ensuring the existence of a HIST.

First, we consider case (i). Let $u \in C$ be the vertex such that $|N(u) \cap I| = 2$. Since $G$ is not a block split graph, there exists a vertex $v \in I$ with \( d(v) \geq 2 \). If \( u \in N(v) \), we retain the edge connecting \( v \) to \( u \) as a pendant edge and remove the other edges incident to \( v \). If \( u \notin N(v) \), we arbitrarily select one vertex in \( N(v) \setminus \{u\} \) and remove all edges incident to \( v \) except for the one to the selected vertex. In this way, we obtain a block split graph in which the selected vertex from \( N(v) \setminus \{u\} \) and \( u \) are good vertices. Since this graph admits a HIST, so does \( G \).

In case (ii), we retain only the edges connecting \( N(u) \cap I (= N(u') \cap I) \) to \( u \) (removing the edges connecting to \( u' \)) and adjust the degrees of the remaining vertices in \( I \) to be one. The resulting graph is a block split graph containing the good vertices \( u \) and \( u' \), and therefore admits a HIST.
Also, in case (iii),  we remove the edges connecting $N(u') \cap I$ and apply the same argument above, and then the resulting graph admits a HIST. 

In case (iv), it suffices to consider the situation where \( \left| \bigcap_{u \in U} (N(u) \cap I) \right| = 0 \) since the case where its size is $2$ has already been addressed in case (ii). We proceed by considering the possible sizes of \( U \).
For the case of \( |U| = 2 \), let \( U = \{u, u'\} \). Since \( N(u) \cap N(u') \cap I = \emptyset \), we connect the vertices in \( N(u) \cap I \) to \( u \) and those in \( N(u') \cap I \) to \( u' \) as pendant vertices, and connect the remaining vertices in \( I \) arbitrarily as pendant vertices. The resulting graph is a block split graph with good vertices \( u \) and \( u' \), which implies that it admits a HIST.

Next, we consider the case \( |U| \geq 3 \). If there exist \( u, u' \in U \) such that \( N(u) \cap N(u') \cap I = \emptyset \), then by the same argument as above, \( G \) admits a HIST. Otherwise, we assume that \( N(u) \cap N(u') \cap I \neq \emptyset \) holds for all \( u, u' \in U \); we refer to this property as the \emph{commonality} condition.
Let us consider a vertex \( u^* \in U \) with \( N(u^*) = \{v_1, v_2\} \). By the commonality condition, there must exist a vertex \( u_1 \in U \) such that \( v_1 \in N(u_1) \) and a vertex \( u_2 \in U \) such that \( v_2 \in N(u_2) \). (If such \( u_1 \) does not exist, then by the commonality condition, every vertex \( u \in U \) must satisfy \( v_2 \in N(u) \), which contradicts our assumption. The existence of \( u_2 \) can be argued similarly.)
By the commonality condition between \( u_1 \) and \( u_2 \), there must exist \( v_3 \in N(u_1) \cap N(u_2) \cap I \). In this configuration, we have \( N(u_1) = \{v_1, v_3\} \) and \( N(u_2) = \{v_2, v_3\} \). Due to the distinctness of the vertices, this case can only occur when \( |U| = 3 \).
In this case, we construct a block split graph by connecting \( v_1 \) and \( v_2 \) to \( u^* \) as pendant vertices, connecting \( v_3 \) to \( u_2 \) as a pendant vertex, and adjusting the remaining vertices in \( I \) as pendant vertices appropriately. This graph contains \( u^* \) and \( u_1 \) as good vertices, and therefore admits a HIST.

\smallskip 

\noindent($\Leftarrow$)  
We first show that if $G = (C, I, E)$ satisfies condition \textbf{1}, then it does not admit a HIST.  
Since each vertex in $C$ is adjacent to exactly one vertex in $I$, the number of edges between $C$ and $I$ is $\sum_{u \in C} |N(u) \cap I| = |C|$. On the other hand, since the graph is connected, every vertex $v \in I$ must satisfy $d(v) \geq 1$, and thus we have:
\[
|I|+1=|C| = \sum_{v \in I} d(v) = \sum_{v \in I} 1 + \sum_{v \in I} (d(v) - 1) \geq |I| + \sum_{v \in I: d(v) > 1} 1.
\]
This implies that exactly one vertex $v^* \in I$ satisfies $d(v^*) = 2$ and all other vertices in $I$ have degree 1.  
Suppose, for contradiction, that $G$ admits a HIST. Since $d(v^*) = 2$, the degree of $v^*$ must be 1 in a HIST, meaning that only one of its two incident edges is included in the HIST.  
This implies that the HIST of $G$ is also a HIST of the graph $G'$ obtained by removing one of the edges incident to $v^*$. However, such a graph $G'$ would then become a block split graph with exactly one good vertex, contradicting \Cref{thm:blocksplit}.  

Next, we show that \( G = (C, I, E) \) does not admit a HIST when it satisfies Condition \textbf{2}.

Let $v^*$ denote the unique element of $\bigcap_{u \in U} N(u) \cap I$. Then,  
$N(u)\cap I \setminus \{u^*\}$ for $u \in U$ are singletons and distinct by condition \textbf{2.b.}, and each of them is not connected to any vertex in $C\setminus U$ by condition \textbf{2.c.}; all of them are pendant vertices and $u^*$ is the unique non-pendant vertex in $G$.  
Suppose, for contradiction, that $G$ admits a HIST $T$. Then $v^*$ must either (i) be a leaf in \( T \), or (ii) be an internal vertex of degree at least 3.

\begin{itemize}
    \item[(i)] Assume that \( v^* \) is a leaf in \( T \), and let \( u' \) denote the vertex adjacent to \( v^* \) in \( T \). Consider the graph obtained by removing all edges \( \{\{u, v^*\} \mid u \in C \setminus \{u'\}, |N(u) \cap I| = 2\} \) from \( G \). Since the veritces in $I$ other than $u^*$ are pendants in $G$, the resulting graph is a block split graph. Moreover, \( T \) remains a HIST in this block split graph. However, this block split graph has no good vertices other than \( u' \), and thus it cannot admit a HIST, leading to a contradiction.

    \item[(ii)] Assume that \( v^* \) is an internal vertex of degree \( l \geq 3 \) in \( T \). Let \( N_T(v^*) = \{u_1, \ldots, u_l\} \subseteq U \). We construct a new graph \( G' = (V \setminus \{v^*\} \cup \{v^*_1, \ldots, v^*_l\}, E \cup \bigcup_{i = 1}^l \{\{u_i, v^*_i\}\}) \) by replacing \( v^* \) with \( l \) copies \( v^*_1, \ldots, v^*_l \), and connecting each \( u_i \) to the corresponding \( v^*_i \). In this graph \( G' \), all vertices \( u \in C \setminus \{u_1, \ldots, u_l\} \) satisfy \( |N(u) \cap I| = 1 \), making \( G' \) a block split graph that contains exactly $l$ good vertices. On the other hand, since each edge \( \{u_i, v^*\} \) in \( T \) has been replaced by \( \{u_i, v^*_i\} \), \( G' \) contains an HISF with \( l \) connected components. This contradicts \Cref{thm:blocksplit2}.
\end{itemize}
Therefore, \( G \) does not admit a HIST under Condition \textbf{2}.
\qed
\end{proof}

Since the conditions of \Cref{thm:split} can be checked in linear time, we have the following corollary. 
\begin{corollary}
\label{cor:split}
Given a split graph \( G = (C, I, E) \), we can determine whether $G$ has a HIST in linear time.
\end{corollary}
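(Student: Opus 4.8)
The plan is to reduce the corollary to a direct, single‑pass verification of the conditions in \Cref{thm:split}, after first disposing of the block‑split case and the trivial small cases. First I would compute a split partition $(C,I)$ of $G$; since split graphs can be recognized and such a partition extracted in linear time from the degree sequence, this costs $O(|V|+|E|)$. I would then test whether every vertex of $I$ has degree at most one: if so, $G$ is block‑split and the linear‑time decision procedure following \Cref{thm:blocksplit} applies (with trees and $K_3$ decided in constant time). In the remaining case $G$ is a split graph that is not block‑split, so \Cref{thm:split} characterizes non‑existence of a HIST, and it suffices to check Conditions \textbf{1} and \textbf{2} in linear time.

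Next I would precompute, in one pass over the adjacency lists, the value $|N(u)\cap I|$ for every $u\in C$ (marking the vertices of $I$ and scanning each $u$'s neighbors), together with the degree of every vertex of $I$; this takes $O(|V|+|E|)$. Condition \textbf{1} then reduces to testing $|N(u)\cap I|=1$ for all $u$ and $|C|-|I|=1$, both immediate. For Condition \textbf{2} I would form $U=\{u\in C:|N(u)\cap I|=2\}$ and verify \textbf{2.a} ($|N(u)\cap I|\in\{1,2\}$ for all $u$ and $|U|\ge 2$) directly from the precomputed counts. Condition \textbf{2.c} is checked by inspecting, for each $u\in C\setminus U$, the degree of its unique $I$‑neighbor. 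Condition \textbf{2.d} is handled by counting, for each $v\in I$, the number of $u\in U$ with $v\in N(u)$ and testing that exactly one $v$ attains the count $|U|$; this is a linear scan over the (at most two) $I$‑neighbors of each $u\in U$.

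The one step that is not completely immediate, and which I expect to be the main obstacle, is Condition \textbf{2.b}: verifying that the two‑element sets $N(u)\cap I$ for $u\in U$ are pairwise distinct. A naive pairwise comparison is quadratic, so instead I would encode each such set as the ordered pair $(\min,\max)$ of the indices of its two members and radix‑sort the resulting $|U|$ pairs in $O(|U|+|I|)$ time, after which a single scan detects any repeated pair. Assembling these checks yields an $O(|V|+|E|)$ decision procedure, proving the corollary. A minor point I would address is that the conditions of \Cref{thm:split} are phrased relative to a chosen split partition; since the existence of a HIST is a property of $G$ alone and the theorem holds for any split partition, fixing the one produced by the recognition algorithm is harmless.
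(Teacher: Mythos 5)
Your proposal is correct and takes essentially the same approach as the paper: the paper's proof consists of the single assertion that the conditions of \Cref{thm:split} (together with the block-split case of \Cref{thm:blocksplit}) can be checked in linear time, which is exactly what you carry out. Your additional details --- the radix-sort check for pairwise distinctness in Condition \textbf{2.b}, the counting argument for Condition \textbf{2.d}, and the remark that fixing one split partition is harmless --- simply fill in the implementation the paper leaves implicit.
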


\subsection{Chordal graphs with diameter 3 having a HIST}
We are now ready to give a characterization of chordal graphs with diameter 3 
that admit a HIST. Before going to the proof, we first see a property of chordal graphs with diameter 3. 
\begin{lemma}[{\cite[Theorem 2.1]{DBLP:journals/dm/KratschDL94}}]\label{lem:dclique}
A chordal graph $G$ has a dominating clique if and only if it has a diameter at most 3.
\end{lemma}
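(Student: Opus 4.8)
The plan is to prove the two directions separately, noting that only the ``diameter at most $3$'' direction needs chordality. For the forward direction (a dominating clique forces small diameter) no structural hypothesis is required. Writing $N[x]=N(x)\cup\{x\}$ for the closed neighbourhood, let $Q$ be a dominating clique and take any two vertices $u,v$. By the definition of a dominating clique each of $u,v$ either lies in $Q$ or has a neighbour in $Q$, so I may pick $a,b\in Q$ with $a\in N[u]$ and $b\in N[v]$. Since $a$ and $b$ are equal or adjacent (they both lie in the clique $Q$), the walk $u,a,b,v$ witnesses $d(u,v)\le 3$; as $u,v$ were arbitrary, the diameter is at most $3$.

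The backward direction is the substantive one, and chordality is genuinely needed: $C_6$ has diameter $3$ but no dominating clique, and it is the prototypical non-chordal obstruction. First I would reduce the goal to finding a single maximal clique $Q$ with $N[Q]=V$, since any dominating clique extends to a maximal clique $Q'$ with $N[Q']\supseteq N[Q]=V$. To construct such a $Q$ I would exploit the two defining features of chordal graphs: every minimal separator is a clique, and $G$ admits a clique tree $T$ whose nodes are the maximal cliques and in which, for each vertex, the cliques containing it form a subtree. The key quantitative step is a distance estimate on $T$: if $u$ and $v$ lie in maximal cliques joined in $T$ by a path whose successive separators are pairwise non-adjacent, then every $u$--$v$ path in $G$ must cross each separator in turn, forcing $d(u,v)$ to grow. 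Imposing diameter at most $3$ therefore collapses $T$ to a short structure and, I would argue, pins down a central maximal clique (or the clique associated with a central separator) within one edge of every other vertex; because separators are themselves cliques, this central object is a genuine clique and hence the desired dominating clique.

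The main obstacle is exactly this translation from the metric hypothesis to the existence of a \emph{single} clique that dominates everything. In a non-chordal graph the diameter bound can be met by several parallel short routes, as in $C_6$, so that no one clique catches all vertices; the proof must use chordality --- concretely, that the chords forced on any long induced cycle supply the clique edges needed to merge these routes --- to rule this out. Carrying this out rigorously requires a careful case analysis of the components of $G$ after deleting a central separator, together with a bound on how far each vertex can sit from that separator under the diameter constraint. This bookkeeping, rather than any single clever idea, is where the real work lies.
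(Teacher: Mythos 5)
First, note that the paper does not prove this lemma at all: it is imported verbatim as Theorem~2.1 of Kratsch, Damaschke, and Lubiw \cite{DBLP:journals/dm/KratschDL94}, so there is no in-paper argument to compare against; your attempt has to stand on its own. Your forward direction is correct and complete: picking $a\in N[u]\cap Q$ and $b\in N[v]\cap Q$ and using the clique edge (or equality) between $a$ and $b$ gives $d(u,v)\le 3$, with no chordality needed. The observation that $C_6$ shows chordality is essential for the converse is also apt.

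The backward direction, however, is a plan rather than a proof, and its pivotal claim is false as stated. You assert that imposing diameter at most~$3$ ``collapses $T$ to a short structure'' and thereby pins down a central maximal clique dominating everything. But the diameter of a chordal graph does not control the size or depth of its clique tree: a split graph has diameter at most~$3$ (often~$2$), yet its clique tree can have arbitrarily many nodes and arbitrarily long paths, because consecutive separators along the tree overlap heavily. Your distance estimate is explicitly conditioned on successive separators being \emph{pairwise non-adjacent}, and that hypothesis simply need not hold, so no collapse of $T$ follows and no central clique is identified. Everything after that point --- that the central separator or clique lies within one edge of every vertex --- is precisely the statement to be proved, and you defer it to ``bookkeeping'' and ``careful case analysis'' that is never carried out. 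Since the entire content of the lemma lives in this direction, the attempt has a genuine gap: what is needed is an argument (as in the cited paper) that directly manufactures a dominating clique from the diameter bound using chordality, e.g.\ by analyzing a clique or separator chosen extremally with respect to domination, not by bounding the shape of the clique tree.
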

Let $G$ be a chordal graph with diameter 3 that is not a split graph. By \Cref{lem:dclique}, $G$ has a (maximal) dominating clique, say $C$. Let $G=(C,\bar{C},E)$, where $\bar{C}=V\setminus C$. Here, $G[\bar{C}]$ contains at least one edge, because otherwise it becomes a split graph. 

\begin{theorem}\label{thm:d3}
Let \( G = (C, \bar{C}, E) \) be a chordal graph with diameter 3 that is not a split graph, and let $U = \{u \in C \mid |N(u) \cap \bar{C}| = 2\}$.
Then, $G$ does not admit a HIST if and only if one of the following holds:
\begin{enumerate}
 \item $\exists u^* \in C: |N(u^*)\cap \bar{C}|\ge 3$, and $\forall u\in C\setminus \{u^*\}: |N(u)\cap \bar{C}|=1$ and the neighbor of $u$ in $\bar{C}$ is a pendant vertex.  
 \item $\forall u \in C: |N(u)\cap \bar{C}|\in \{1,2\}$, every vertex in $\bigcup_{u\in C\setminus U} N(u)$ is a pendant vertex, and
    \begin{enumerate}
        \item $|U|=1$, or 
        \item $|U|=2$ and $\left|\bigcap_{u \in U} (N(u)\cap \bar{C})\right| = 1$, or 
        \item $|U|\ge 3$, for each $u\in U$, the neighborhoods $N(u)\cap \bar{C}$ are distinct, $\left|\bigcap_{u \in U} (N(u)\cap \bar{C})\right| = 1$, and $G[\bar{C}]$ contains 1 edge. 
 \end{enumerate}
\end{enumerate}
\end{theorem}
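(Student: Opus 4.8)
The plan is to treat this statement as the diameter-3 counterpart of \Cref{thm:split} and to follow the same architecture, driving everything through \Cref{prop:subgraph} together with the block-split characterizations (\Cref{thm:blocksplit,thm:blocksplit2}) and the split-graph characterization (\Cref{thm:split}). The only genuinely new feature is that $G[\bar C]$ now carries at least one edge, so the bookkeeping must account for the extra degree such edges supply. I would organize the whole argument by the profile $\{|N(u)\cap\bar C| : u\in C\}$ together with the number and placement of the edges of $G[\bar C]$, mirroring the cases (A)/(B) of the split-graph proof. Throughout I would use the structural fact, forced by $C$ being a dominating clique (\Cref{lem:dclique}) and by the pendant hypotheses, that every vertex of $\bar C$ attaches to $C$ in a very restricted way: each non-pendant vertex of $\bar C$ touches only the $U$-vertices, and the neighbours of a $C\setminus U$ vertex are pendants, so the lone edge(s) of $G[\bar C]$ must sit inside the pocket $\{v^*\}\cup\{w_u : u\in U\}$.

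For the non-existence direction I would first dispose of condition~\textbf{1} and condition~\textbf{2.(a)} by a short \emph{clique-forest} counting argument that does not even use the edges of $G[\bar C]$. In any hypothetical HIST $T$, each pendant forces its unique clique neighbour to be internal, hence of degree at least $3$; since such a clique vertex has its pendant as its only neighbour outside $C$, it must send at least two tree-edges into $C$. Letting $T[C]$ be the forest that $T$ induces on $C$ and summing degrees gives $2|E(T[C])|\geq 2(|C|-1)$, so $T[C]$ is a spanning tree of $C$ in which the single distinguished vertex ($u^*$ under condition~\textbf{1}, the single element of $U$ under~\textbf{2.(a)}) has degree $0$. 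This is impossible: that vertex's only non-clique neighbours form a pocket of $\bar C$ attached to no other clique vertex, so it would be severed from the rest of $T$.

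For conditions~\textbf{2.(b)} and~\textbf{2.(c)} I would pivot on the unique vertex $v^*$ of $\bigcap_{u\in U}(N(u)\cap\bar C)$ and split on whether $T$ uses any edge of $G[\bar C]$. If $T$ uses none, then $T$ is a HIST of the split graph obtained by deleting those edges, which satisfies condition~\textbf{2} of \Cref{thm:split} and therefore admits no HIST — a contradiction. If $T$ does use them, I would reuse the vertex-splitting idea from the proof of \Cref{thm:split}: when $v^*$ is a leaf, deleting the redundant $U$-edges at $v^*$ leaves a block-split graph with at most one good vertex, contradicting \Cref{thm:blocksplit}; when $v^*$ is internal of degree $l\geq 3$, replacing $v^*$ by $l$ pendant copies yields a block-split graph admitting a HISF with $l$ components but containing only $l$ good vertices, contradicting the $2l$ bound of \Cref{thm:blocksplit2}. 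The hypotheses of~\textbf{2.(b)}/\textbf{2.(c)} — distinct neighbourhoods, $\bigl|\bigcap_{u\in U}(N(u)\cap\bar C)\bigr|=1$, and, for $|U|\geq 3$, the presence of exactly one edge in $G[\bar C]$ — are precisely what keeps the good-vertex supply below the threshold. For the existence direction (the contrapositive), when some clique vertex has at least three $\bar C$-neighbours, or at least two clique vertices have two, I would delete $\bar C$-edges and trim $\bar C$-degrees to $1$ to expose a block-split graph with two good vertices and invoke \Cref{thm:blocksplit,prop:subgraph}; this reproduces cases (A) and the early part of (B) of the split proof almost verbatim, and in the remaining $|U|\geq 3$ regime I would retain a \emph{second} edge of $G[\bar C]$ and use it to give some $w_u$ the extra degree that lets $v^*$ branch without creating a degree-2 vertex.

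The main obstacle is exactly this boundary in case~\textbf{2.(c)}: proving that one edge of $G[\bar C]$ is insufficient while a second edge suffices. This is where chordality and diameter~$3$ are essential — the dominating-clique structure pins the lone $\bar C$-edge inside the pocket $\{v^*\}\cup\{w_u\}$, so it cannot supply branching at two independent places at once, whereas a second edge can. Making the count of good vertices and HISF components tight enough to separate ``one edge'' from ``two edges,'' while simultaneously ruling out the cycles that $v^*$ tends to create when several $U$-vertices reach it, is the crux; I expect the edge-used subcase of~\textbf{2.(c)} and the matching construction with a second edge to require the most delicate case analysis.
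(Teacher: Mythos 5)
Your overall architecture is the paper's: reduce to block-split/split graphs by edge deletions via \Cref{prop:subgraph}, handle an internal $v^*$ by the vertex-splitting/HISF argument, and settle the $|U|\ge 3$ boundary with an explicit construction that exploits two edges of $G[\bar C]$. Within that, your clique-forest counting argument for conditions \textbf{1} and \textbf{2.(a)} is a genuinely different and correct route: every vertex of $C$ other than the distinguished one ($u^*$, resp.\ the unique element of $U$) owns a pendant and has no other neighbour outside $C$, so it must send at least two tree edges into $C$; the forest bound $|E(T[C])|\le |C|-1$ then forces $T[C]$ to be a spanning tree of $C$ in which the distinguished vertex is isolated, absurd for $|C|\ge 2$. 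The paper instead argues these cases by reductions to \Cref{thm:blocksplit} plus a splitting argument; your count is shorter and avoids \Cref{thm:blocksplit2} there entirely.

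The genuine gap is in \textbf{2.(b)}/\textbf{2.(c)}, in the case where the hypothetical HIST $T$ uses an edge of $G[\bar C]$. By chordality every edge of $G[\bar C]$ is incident to $v^*$: its two endpoints must have a common neighbour in $C$, that neighbour then lies in $U$, and since $v^*$ belongs to every $N(u)\cap\bar C$ with $u\in U$ and these sets have size $2$, the edge is $\{v^*,v_i\}$ for some $i$. So this case forces some $v_i\in\bar C$ to be a tree-neighbour of $v^*$ (and it makes your subcase ``$v^*$ is a leaf'' vacuous, since the used edge would then be an entire two-vertex component of $T$). Consequently, replacing $v^*$ by $l$ pendant copies attaches at least one copy to a vertex of $\bar C$, and the resulting graph has an edge inside its non-clique side: it is not a block-split graph (not even a split graph), so \Cref{thm:blocksplit2} cannot be invoked as you state, and your count of ``only $l$ good vertices'' presumes all tree-neighbours of $v^*$ lie in $C$, contrary to the case hypothesis. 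The repair is local but must be made: each such $v_i$ has degree $2$ in $G$, hence is a leaf of $T$, so after splitting, its component is exactly the single edge $\{v_i,v^*_i\}$; delete these two-vertex components, and the remainder is a block-split graph carrying a HISF whose good-vertex supply still falls short of twice its component count (for instance, under \textbf{2.(b)} with $N_T(v^*)=\{v_1,u_1,u_2\}$ one is left with two components but only the single good vertex $u_2$), so the contradiction with \Cref{thm:blocksplit,thm:blocksplit2} goes through. This extra bookkeeping is precisely what the paper's cases (i)--(iii) for \textbf{2.(b)} and (a)--(b) for \textbf{2.(c)} carry out.
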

\begin{proof}
($\Rightarrow$) If neither condition \textbf{1} nor condition \textbf{2} holds, it suffices to show that the graph admits a HIST. A case that clearly violates both conditions is when there exists a vertex $u^* \in \bar{C}$ such that $|N(u^*) \cap \bar{C}| = 0$. First, if the subgraph of $G$ obtained by deleting edges within $G[\bar{C}]$ is not a block-split graph but a split graph, then it evidently does not satisfy the conditions of \Cref{thm:split}, and thus it admits a HIST. Thus, we consider the case where the resulting graph is a block-split graph. In such a case, suppose that there is no vertex $u$ such that $N(u)$ contains both endpoints of a deleted edge $\{v, v'\}$. Then, by selecting $v \in N(u)$ and $v' \in N(u')$, the sequence $(u, v, v', u, u)$ forms a chordless cycle of length four, which contradicts the chordality of the graph. Hence, since $G[\bar{C}]$ contains at least one edge, there must exist a vertex $u$ such that $N(u)$ contains some $\{v, v'\}$. This implies that $|N(u) \cap \bar{C}| \geq 2$, and therefore, the block-split graph obtained by deleting edges within $G[\bar{C}]$ contains good vertices $u^*$ and $u$, and thus admits a HIST.

In the following, we thus consider the case where all vertices in $C$ satisfy $|N(u) \cap \bar{C}| \geq 1$.
If there exists a vertex $u_1 \in C$ such that $|N(u_1) \cap \bar{C}| \geq 3$, the cases where condition \textbf{1} is not satisfied are when there exists another vertex $u_2 \in C \setminus \{u_1\}$ such that $|N(u_2) \cap \bar{C}| \geq 2$, or when $\forall u\in C\setminus \{u^*\}: |N(u)\cap \bar{C}|=1$ but $\exists u' \in C\setminus \{u^*\}:$ the neighbor $v$ of $u'$ in $\bar{C}$ has $d(v)\ge 2$. In the former case, the block-split graph or split graph obtained by deleting edges within $G[\bar{C}]$ admits a HIST by \Cref{thm:split,thm:blocksplit}. In the latter case, note that the edges in $G[\bar{C}]$ are only in $N(u^*)\cap \bar{C}$ due to the chordality. Thus, by deleting edge $\{u',v\}$ and edges in $G[\bar{C}]$, we again obtain a block-split graph or split graph that admits a HIST by \Cref{thm:split,thm:blocksplit}.

If there is no vertex $u_1 \in C$ such that $|N(u_1) \cap \bar{C}| \geq 3$, then all vertices satisfy $|N(u) \cap I| \in \{1, 2\}$, which corresponds to condition \textbf{2}.  
We consider this case based on the value of $|U|$, where $U = \{u \in C \mid |N(u) \cap I| = 2\}$. Since condition \textbf{2} also forces that every vertex in $\bigcup_{u\in C\setminus U} N(u)$ is a pendant vertex, we consider the case where there exists a vertex $v^*$ in $\bigcup_{u\in C\setminus U} N(u)$ that is not a pendant vertex. Note that $G[\bar{C}]$ containing an edge implies $|U|>0$, because otherwise it violates chordality. 
%If $|U| = 0$, then $\forall u \in C: |N(u) \cap I| = 1$, implying that $G[\bar{C}]$ cannot contain any edges, which contradicts the assumption.  
Thus, there is a vertex $u'$ with $|N(u') \cap \bar{C}|=2$. Then, by removing edges incident to $v^*$ except one and edges in $G[\bar{C}]$, we obtain a (block) split graph having a HIST by \Cref{thm:blocksplit,thm:split}. 
Thus, in the following, we assume that every vertex in $\bigcup_{u\in C\setminus U} (N(u)\cap I)$ is a pendant vertex.  

If $|U| \geq 2$, condition \textbf{2} is not satisfied if either $\bigcap_{u \in U} (N(u) \cap \bar{C}) = \emptyset$ or there exist $u, u' \in U$ such that $N(u) \cap \bar{C}= N(u') \cap \bar{C}$.  
In this case, the split graph or block-split graph obtained by deleting edges within $G[\bar{C}]$ admits a HIST by \Cref{thm:blocksplit,thm:split}.  

Finally, we consider the case where $|U| \geq 3$, the neighborhoods $N(u) \cap \bar{C}$ for $u \in U$ are pairwise distinct, and $\bigcap_{u \in U} (N(u) \cap \bar{C}) = 1$.  
Condition \textbf{2} is not satisfied in this case if $G[\bar{C}]$ contains at least two edges.  
Without loss of generality, assume that $\{u_1, u_2, u_3\} \subseteq U$ satisfy $N(u_1) = \{v^*, v_1\}$, $N(u_2) = \{v^*, v_2\}$, and $N(u_3) = \{v^*, v_3\}$, and that $\{v^*, v_1\}, \{v^*, v_2\} \in E$.  
In this case, by deleting the edges $\{u_1, v_1\}$, $\{u_1, v^*\}$, $\{u_2, v_2\}$, and $\{u_2, v^*\}$, while retaining the edge $\{u_3, v^*\}$, and deleting all remaining edges in $G[\bar{C}]$, the resulting graph admits a HIST.  
Indeed, the graph obtained by removing vertices $u_1$ and $u_2$, the edges $\{u_1, v^*\}$ and $\{u_2, v^*\}$, and all edges within $G[\bar{C}]$, and then adding the edges $\{v_1, v^*\}$ and $\{v_2, v^*\}$ to a HIST of this reduced graph, constitutes a HIST of the original graph $G$.  

From the above, we have shown that if neither condition 1 nor condition 2 holds, then $G$ admits a HIST.

\medskip 

\noindent ($\Leftarrow$) We first show that if $G = (C, \bar{C}, E)$ satisfies condition 1, then it does not admit a HIST. Since $G$ is not a split graph, $G[\bar{C}]$ must contain at least one edge. By chordality, such an edge must exist between vertices in $N(u^*) \cap I$.  Now, suppose there exists a HIST in which all vertices in $N(u^*) \cap I$ are leaves. In this case, the HIST is also a HIST of the block-split graph obtained by deleting the edges within $N(u^*) \cap I$ from $G$. However, this contradicts \Cref{thm:blocksplit}.  
Therefore, if $G$ admits a HIST, the HIST must contain at least one internal vertex of degree at least three among the vertices in $N(u^*) \cap I$. However, by an argument similar to case ($\Leftarrow$) (ii) in the proof of \Cref{thm:split}, it can be seen that even in this case, the graph does not admit a HIST.

Next, we will show that if $G = (C, \bar{C}, E)$ satisfies condition 2, then it does not admit a HIST. Note that the edge in $G[\bar{C}]$ must be in $\bigcup_{u\in U} N(u)\cap I$ by the same argument above. We start from condition 2(a). This case is almost trivial. 
Since the vertices in $\bigcup_{u\in U} N(u)\cap I$ have degree 2, they must be leaves in any HIST; it is essentially a block-split graph with no HIST. 

We then go to graphs satisfying condition 2(b). Let $v^*$ be the unique vertex of $\bigcap_{u\in U} (N(u)\cap I)$, which is the only vertex that can be an internal vertex with degree at least $3$ in a HIST $T$. Let $U=\{u_1,u_2\}$, $N(u_1)=\{v^*,v_1\}$, and $N(u_2)=\{v^*,v_2\}$. There are three essential cases (i) $N_T(v^*)=\{v_1,v_2,u_1\}$,  
(ii) $N_T(v^*)=\{v_1,u_1,u_2\}$,  and (iii) $N_T(v^*)=\{v_1,v_2,u_1,u_2\}$. 
In case (i), if $G$ has a HIST satisfying this condition, graph $G'=(V\setminus \{v_1,v_2\}, E\setminus \{\{u_2,v^*\}\})$ does so, but it is a block-split graph with one good vertex, leading to a contradiction. We consider case (ii). If $u_1$ is a leaf in a HIST, the case is reduced to $G\setminus \{u_1\}$, concluding that it does not admit a HIST. Otherwise, by an argument similar to case ($\Leftarrow$) (ii) in the proof of \Cref{thm:split}, the case is reduced to whether a graph $G'$ transformed from $G$ has a HISF with $3$ components, concluding that $G$ has no HIST satisfying the condition.  
Case (iii) is also confirmed like (ii). 

Finally, we consider graphs satisfying condition 2(c). 
Suppose that $v^*$ is the unique vertex of $\bigcap_{u\in U} (N(u)\cap I)$, $U=\{u_1,\ldots,u_{|U|}\}$, and $\{v_1,v^*\}$ is the unique edge in $G[\bar{C}]$. In this case also, $v^*$ is the only vertex that can be an internal vertex with degree $l (\ge 3)$ in a HIST $T$. There are two essential cases: (a) $N_T(v^*)=\{v_1, u_1,\ldots,u_{l-1} \}$, and  (b) $N_T(v^*)=\{v_1, u_2,\ldots,u_{l}\}$. Case (a) can be considered as condition 2 (b) (ii); the case divides into that in a HIST $u_1$ is a leaf, or not, and we can conclude that $G$ has no HIST satisfying the condition. In case (b), $u_1$ can become a good vertex in a graph transformed from $G$, but the number of connected components is more; we can conclude that $G$ has no HIST satisfying the condition again. Overall,  graphs satisfying condition 2(c) has no HIST. \qed
\end{proof}

Note that finding a dominating clique of a chordal graph can be done in polynomial time, because we can enumerate all maximal cliques of a chordal graph in linear time~\cite{DBLP:journals/siamcomp/Gavril72}. 
Given a maximal dominating clique, the conditions of \Cref{thm:d3} can be easily checked, which implies the following corollary. 

\begin{corollary}
Given a chordal graph \( G\) with diameter at most $3$, we can determine whether $G$ has a HIST in polynomial time.
\end{corollary}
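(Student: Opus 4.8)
The plan is to assemble a decision procedure by a case analysis on the diameter and on whether the input is a split graph, invoking in each branch the appropriate characterization established earlier. First I would compute the diameter of $G$ in polynomial time (for instance by running a breadth-first search from every vertex), which immediately separates the relevant regimes. If the diameter is at most $2$, the existence of a HIST is decidable in polynomial time by \Cref{thm:diameter2}, so that branch is done.

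For diameter exactly $3$, I would first test whether $G$ is a split graph; split graphs are recognizable in linear time. If $G$ is a split graph, I apply \Cref{cor:split} to decide in linear time. If $G$ is not a split graph, then $G$ falls under the hypotheses of \Cref{thm:d3}, but before I can check its conditions I must produce a concrete partition $G = (C, \bar{C}, E)$ with $C$ a maximal dominating clique. By \Cref{lem:dclique}, such a clique exists because the diameter is at most $3$. To find one efficiently, I would enumerate all maximal cliques of $G$, which for a chordal graph can be done in linear time and yields at most $n$ cliques~\cite{DBLP:journals/siamcomp/Gavril72}, and then test each for domination; testing whether a given clique is a dominating set is clearly polynomial. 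Since any dominating clique extends to a maximal clique that remains dominating, at least one maximal clique in the enumeration is dominating, so this step succeeds in polynomial time.

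With $C$ fixed, the remaining task is to verify conditions \textbf{1} and \textbf{2} of \Cref{thm:d3}. Each ingredient of these conditions is an elementary computation on the partition: forming $N(u)\cap\bar{C}$ for every $u\in C$, identifying the set $U=\{u\in C\mid |N(u)\cap\bar{C}|=2\}$, testing whether the vertices of $\bigcup_{u\in C\setminus U}N(u)$ are pendant, comparing the neighborhoods $N(u)\cap\bar{C}$ for distinctness, evaluating $\left|\bigcap_{u\in U}(N(u)\cap\bar{C})\right|$, and counting the edges of $G[\bar{C}]$. All of these can be carried out in polynomial (indeed near-linear) time, so the conditions are checkable efficiently; $G$ admits a HIST precisely when neither condition holds.

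The step I expect to require the most care is the selection of the clique $C$: I would argue that the yes/no answer returned by checking \Cref{thm:d3} is independent of which maximal dominating clique is chosen, since admitting a HIST is an intrinsic property of $G$ while \Cref{thm:d3} certifies exactly this property for any such $C$. Once that independence is noted, the correctness of the procedure follows from chaining \Cref{thm:diameter2}, \Cref{cor:split}, and \Cref{thm:d3}, and the overall running time is polynomial since each branch and each check above is polynomial.
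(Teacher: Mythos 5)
Your proposal is correct and follows essentially the same route as the paper: enumerate the maximal cliques of the chordal graph in linear time (Gavril), select one that dominates (which exists by \Cref{lem:dclique} and maximality preserving domination), and check the conditions of \Cref{thm:d3}, with the split-graph and diameter-at-most-2 cases dispatched by \Cref{cor:split} and \Cref{thm:diameter2}. Your extra remarks (that some maximal clique must be dominating, and that the answer is independent of which maximal dominating clique is chosen) are exactly the details the paper leaves implicit.
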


%\section{Strongly Chordal Graphs of Diameter at Most 4}
\section{Hardness results}
In this section, we give an NP-hardness proof, which yields several hardness results. 

What we show in this section is the NP-hardness of deciding whether a given strongly chordal graph with diameter at most 4 admits a HIST. A \emph{chordal bipartite graph} is a bipartite graph in which every induced cycle of length at least 6 has a chord\cite{DBLP:journals/dm/Muller96a}.
To prove this, we use the fact that it is NP-hard to decide  whether a given chordal bipartite graph $G=(U,V,E)$ with two pendant vertices $s\in U,t\in V$ and $|U|=|V|$ admits a Hamiltonian path~\cite{DBLP:journals/dm/Muller96a}. 
We first construct $G'$ from $G$ by adding two new vertices $s'$ and $s''$ as follows: $G'=(U\cup \{s'\}, V\cup \{s''\}, E\cup \{\{s,s''\}\}\cup \bigcup_{v \in V\cup \{s''\}\setminus \{t\}} \{s',v\})$. This modification does not yield a new induced cycle with length at least $6$. In fact, since a cycle in $G'$ but not in $G$ must contain $s'$, we focus on a cycle containing $s'$; a cycle containing $s'$ with length at least 6 contains at least two vertices in $V$, but $s'$ is adjacent to them, which spans a chord. Therefore, $G'$ is still a chordal bipartite graph. We also see that $G'$ has an $s'$-$t$ Hamiltonian path if and only if $G$ has an $s$-$t$ Hamiltonian path. Here, the if-direction is obvious, so we consider the only-if direction. 
Suppose $G'$ has an $s'$-$t$ Hamiltonian path. If it goes as $(s',s'',s,\ldots,t',t)$, it contains an $s$-$t$ Hamiltonian path in $G$. Otherwise, the path starts from $s$, and then goes to a vertex in $V$. However, then $s''$ cannot be passed, and such a case never happens. 

We next construct $G''$. Its vertex set is the same as $G'$, that is, $V(G''):=V(G')$. The edge set $E(G''):=E(G')\cup \binom{U\cup \{s'\}}{2}$, that is, $U\cup \{s'\}$ forms a clique; $G''$ is a split graph. Furthermore, we can see that it does not contain a $k$-\emph{sun} for any $k\ge 3$ as an induced subgraph. Here, for $k\ge 3$, a $k$-\emph{sun} is a graph on vertices $X \cup Y$ with $X=\{x_0,\ldots,x_{k-1}\}$ and $Y=\{y_0,\ldots,y_{k-1}\}$ such that $X$ and $Y$ respectively form a clique and an independent set, and for every $i\in \{1,\ldots,k\}$, $y_i$ is adjacent to $x_i$ and $x_{(i+1) \text{mod\ } k}$. If $G''$ contains a $k$-sun for $k\ge 3$, its  clique part $X$ is in $U\cup \{s'\}$ and the independent set part $Y$ is in $V\cup \{s''\}$, and thus $G'$ must contain an induced cycle $(x_0,y_0,x_1,\ldots,x_i,y_i,x_{i+1},\ldots,y_{k-1},x_0)$, whose length is $2k$. This contradicts that $G'$ is chordal bipartite. Since sun-free chordal graphs are strongly chordal~\cite{DBLP:journals/dm/Farber83}, we obtain the following lemma. 

\begin{lemma}\label{lem:scs2hp}
For a given strongly chordal split graph $G$, $s,t\in V(G)$, determining whether $G$ admits an $s$-$t$ Hamiltonian path is NP-complete.
\end{lemma}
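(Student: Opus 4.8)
The plan is to establish both NP-membership and NP-hardness for the $s$-$t$ Hamiltonian path problem on strongly chordal split graphs. Membership in NP is immediate: an $s$-$t$ Hamiltonian path is a polynomial-size certificate that can be checked in polynomial time. For hardness I would reduce from the $s$-$t$ Hamiltonian path problem on chordal bipartite graphs with equal-size parts and pendant endpoints, the result quoted above from Müller. The graph $G''$ built from such an instance has already been constructed, and it has already been verified that $G''$ is a split graph that is sun-free, hence strongly chordal; the reduction is clearly polynomial-time. What remains is to confirm that the construction is answer-preserving.

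The heart of the proof is the equivalence that $G''$ admits an $s'$-$t$ Hamiltonian path if and only if $G'$ does; chaining this with the equivalence between $G'$ and the original $G$ established above reduces the question back to the chordal bipartite instance. The forward direction is trivial, since $G'$ is a spanning subgraph of $G''$, so any $s'$-$t$ Hamiltonian path of $G'$ is one of $G''$. The reverse direction is the only nontrivial step, and it is where I expect the main obstacle: a Hamiltonian path in $G''$ is a priori free to use the newly added clique edges inside $U\cup\{s'\}$, which do not exist in $G'$, so I must rule this out.

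For the reverse direction I would argue by a double-counting argument that no $s'$-$t$ Hamiltonian path of $G''$ can use any of the added clique edges. Write $A=U\cup\{s'\}$ and $B=V\cup\{s''\}$, and set $m=|A|=|B|=|U|+1$, where the equality of part sizes uses the hypothesis $|U|=|V|$. Since every added edge lies inside $A$, the set $B$ remains independent in $G''$. Given any $s'$-$t$ Hamiltonian path $P$, its endpoints are $s'\in A$ and $t\in B$, so among the $A$-vertices only $s'$ has path-degree $1$ while every other has path-degree $2$, giving $\sum_{v\in A}\deg_P(v)=2(m-1)+1=2m-1$, which equals the number of edges of $P$. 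Because $B$ is independent, each of these $2m-1$ edges has at least one endpoint in $A$, so the total number of $A$-endpoint incidences is at least $2m-1$; as this total also equals $\sum_{v\in A}\deg_P(v)=2m-1$, equality forces every edge of $P$ to have exactly one endpoint in $A$, i.e.\ to be an $A$-$B$ edge. Every such edge already exists in $G'$, so $P$ is an $s'$-$t$ Hamiltonian path of $G'$.

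Combining the two directions yields the desired equivalence, and together with the earlier equivalence and the NP-hardness of the chordal bipartite problem this establishes NP-hardness; with NP-membership it gives NP-completeness. The decisive and most delicate point is the reverse implication handled by the counting argument above, which relies crucially on the exact equality of the two part sizes and on the two prescribed endpoints lying in opposite parts.
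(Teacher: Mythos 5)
Your proposal is correct and follows the same reduction as the paper: reduce from M\"uller's $s$-$t$ Hamiltonian path result on chordal bipartite graphs, pass to $G'$ and then to the split graph $G''$ obtained by completing $U\cup\{s'\}$ into a clique, and invoke sun-freeness to get strong chordality. The one substantive difference is that the paper only proves the equivalence between $G$ and $G'$ and the structural properties of $G''$; it treats the passage from an $s'$-$t$ Hamiltonian path of $G''$ back to one of $G'$ as immediate, without ruling out that such a path uses the added clique edges inside $U\cup\{s'\}$. Your double-counting argument supplies exactly this missing step: with $|U\cup\{s'\}|=|V\cup\{s''\}|=m$ and the endpoints $s'$ and $t$ in opposite parts, the $2m-1$ edges of the path account for exactly $2m-1$ endpoint incidences in $U\cup\{s'\}$, forcing every edge to cross between the two parts and hence to lie in $G'$. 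This makes your write-up strictly more rigorous than the paper's; the same point also implicitly underlies the ($\Leftarrow$) direction of the paper's proof of \Cref{thm:NP:diameter}, where a degree count yields a Hamiltonian path of $G''$ that is then asserted to be a path of $G'$, and your counting lemma justifies that assertion as well. Note that your argument does depend crucially on $|U|=|V|$ and on $s'$, $t$ lying in different sides, both of which hold in the construction.
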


\begin{theorem}\label{thm:NP:diameter}
For a strongly chordal graph $G$ of diameter at most 4, determining whether $G$ has a HIST is NP-complete.
\end{theorem}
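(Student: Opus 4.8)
The plan is to prove NP-completeness by a reduction from the $s$-$t$ Hamiltonian path problem on strongly chordal split graphs, whose hardness is \Cref{lem:scs2hp}. Membership in NP is immediate, since a HIST is a polynomial-size certificate and checking that a given spanning tree has no degree-$2$ vertex is done in polynomial time. For the hardness I would not reduce from an arbitrary strongly chordal split graph, but from the concrete instances $G=(C,I,E)$ produced by the construction preceding \Cref{lem:scs2hp}; these enjoy the extra property that the start vertex $s$ lies in the clique $C$ and is adjacent to every vertex of $I$ except the target $t$, while $t\in I$ has only one neighbor. This near-universality of $s$ will be exactly what keeps the diameter small. From such a $G$ I construct $G^{*}$ by adding, for every vertex $v\in V(G)\setminus\{s,t\}$, a fresh pendant vertex $p_v$ whose only neighbor is $v$; the two designated endpoints $s$ and $t$ receive no pendant.

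Next I would establish the equivalence that $G^{*}$ has a HIST if and only if $G$ has an $s$-$t$ Hamiltonian path. For the easy direction, given a Hamiltonian path $P$ from $s$ to $t$, I take $P$ together with all pendant edges: each internal vertex of $P$ then has degree $2+1=3$, the endpoints $s,t$ remain degree-$1$ leaves, and every $p_v$ is a leaf, so no degree-$2$ vertex occurs and the result is a HIST. For the converse, given a HIST $T$ of $G^{*}$, each pendant $p_v$ is necessarily a leaf hanging off its unique neighbor $v$, so deleting all pendants leaves a spanning tree $T'$ of $G$ with $\deg_T(v)=\deg_{T'}(v)+1$ for every $v\notin\{s,t\}$. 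Because $T$ has no degree-$2$ vertex, each such $v$ satisfies $\deg_{T'}(v)\ge 2$, so the only vertices that can be leaves of $T'$ are $s$ and $t$; since a tree on at least two vertices has at least two leaves and a tree with at most two leaves is a path, $T'$ is forced to be an $s$-$t$ Hamiltonian path of $G$.

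Finally I would verify the two structural requirements on $G^{*}$. Strong chordality is preserved because a pendant vertex has degree one and hence lies on no cycle (equivalently, on no induced $k$-sun with $k\ge 3$); thus every cycle of $G^{*}$ already occurs in the strongly chordal graph $G$, and both chordality and the odd-chord condition carry over. For the diameter, recall that a split graph has diameter at most $3$, so the only pairs whose distance might exceed $4$ in $G^{*}$ are pendant--pendant pairs $p_u,p_v$ whose owners $u,v$ lie in $I$ at distance $3$ in $G$. This is precisely where the near-universality of $s$ is used: every vertex of $I$ other than $t$ is adjacent to $s$, so any two such owners are at distance at most $2$ through $s$, making $\mathrm{dist}(p_u,p_v)\le 4$; the single far vertex $t$ carries no pendant, and all remaining pairs are easily seen to be within distance $4$. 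Hence $G^{*}$ is a strongly chordal graph of diameter at most $4$, and the reduction is complete.

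I expect the main obstacle to be the diameter control rather than the correctness of the forcing gadget: a naive attachment of pendants to a generic strongly chordal split graph can create pendant pairs at distance $5$, so the argument genuinely depends on reducing from the structured instances in which a clique vertex dominates all but one vertex of the independent set. Verifying that strong chordality survives the pendant additions is routine via the sun-free characterization, and NP membership is trivial.
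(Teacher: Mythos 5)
Your proof is correct and takes essentially the same route as the paper: reduce from the $s$-$t$ Hamiltonian path problem on the strongly chordal split graphs of \Cref{lem:scs2hp}, attach pendant vertices so that any HIST must contract to an $s$-$t$ Hamiltonian path, and use the near-universal clique vertex to keep the diameter at $4$ and the fact that pendants lie on no cycle to preserve strong chordality. The only cosmetic differences are that the paper gives $s'$ two pendants (so it ends up with degree $3$ in the HIST) and proves the converse via a handshaking-lemma degree count, whereas you give $s$ no pendant and argue directly that, after deleting pendants, the surviving spanning tree can have leaves only at $s$ and $t$ and hence is a Hamiltonian path.
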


\begin{proof}
We reduce from the \( s \)-\( t \) Hamiltonian path problem in the strongly chordal split graph \( G' \) constructed in the proof of \Cref{lem:scs2hp}. Let $H$ be the graph obtained from \( G'' \) by adding new pendant vertices as follows: for every vertex in $U\cup V\cup\{s''\}\setminus \{t\}$, attach one pendant vertex, and for $s'$, attach two pendant vertices.  
Note that $H$ is still strongly chordal and its diameter is at most $4$, since $s$ is adjacent to any vertex in $U\cup V\cup\{s''\}\setminus \{t\}$. 

Now we prove that $H$ admits a HIST if and only if \( G' \) has an \( s' \)-\( t \) Hamiltonian path.

\noindent 
(\(\Rightarrow\)) Suppose \( G' \) has an \( s' \)-\( t \) Hamiltonian path $P$. % Let $t'$ be the unique neighbor of $t$ (recall that $t$ is a pendant vertex). 
Here, the vertices in $P$ are $U\cup V\cup\{s',s''\}$, and by attaching the new pendant vertices in $H$, we obtain a spanning tree of $H$. 
We can see that in this spanning tree, the degrees of $U\cup V\cup\{s',s''\}\setminus \{t\}$ are all 3, and that of $t$ is 1; it is a HIST of $H$. 

\noindent
(\(\Leftarrow\)) Suppose \( H \) admits a HIST \( T \). 
The number of vertices in $H$ is the sum of the vertices in $G'$ and the number of newly added pendant vertices. 
Since the former is $|U\cup V\cup\{s',s''\}|=2|U|+2$, 
and the latter is $|U\cup V\cup\{s',s''\}\setminus \{t\}|+1=2|U|+2$  (because $s'$ has two pendants), the total is $4|U|+4$. 
This implies that the sum of degrees of the vertices in $T$, called the \emph{total degree} of $T$, is $2(4|U|+3)=8|U|+6$ by the handshaking lemma. 
We then consider the tree \( T' \) obtained by removing all pendant vertices other than $t$ from $T$. Since the number of removed pendant vertices is $2|U|+2$, the total degree of $T'$ is $8|U|+6-2(2|U|+2)=4|U|+2$. Here, we count the total degree of $T'$ in another way. Since the vertices in $U\cup V\cup\{s''\}\setminus \{t\}$ are internal vertices in HIST $T$, their degrees are at least 3; their degrees in $T'$ are at least $2$. The sum of degrees of $U\cup V\cup\{s''\}\setminus \{t\}$ in $T'$ is at least $4|U|$. The remaining vertices in $T'$ are $s'$ and $t$, and the sum of their degrees is at least 2; the total degree of $T'$ is at least $4|U|+2$. Since the total degree of $T'$ is exactly $4|U|+2$ as seen above, the degree of every vertex  in $U\cup V\cup\{s''\}\setminus \{t\}$ in $T'$ must be exactly 2, and those of $s'$ and $t$ are exactly 1. This implies that $T'$ is an $s'$-$t$ Hamiltonian path of $G'$. 
\qed 
\end{proof}
%ここまでsyuusei

It is shown that $s$-$t$ Hamiltonian Path is NP-complete even on chordal bipartite graphs~\cite{tcs/HanakaK25} and planar graphs of maximum
degree 3~\cite{sofsem/MeloFS21}. Moreover, it is W[1]-hard when parameterized by cliquewidth~\cite{tcs/HanakaK25}.
Since our reduction (with a small modification) in \Cref{thm:NP:diameter} is only attaching a pendant vertex to each vertex, the resulting graph keeps chordal bipartiteness, or planarity. 
Also, it increases the maximum degree by one and clique-width by at most one. Thus, we have the following corollary.

\begin{corollary}\label{cor:NP:other}
    Determining whether G has a HIST is NP-complete even on chordal bipartite graphs and planar graphs of maximum degree 4. Moreover, it is W[1]-hard when parameterized by cliquewidth.
\end{corollary}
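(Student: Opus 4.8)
The plan is to reuse the reduction underlying \Cref{thm:NP:diameter}, but in its bare form applied directly to the three source classes for which $s$-$t$ Hamiltonian path is already known to be hard. The construction in \Cref{thm:NP:diameter} does two separate things: it first turns the input into a split graph $G''$ (needed only to obtain strong chordality and diameter $4$), and then it attaches pendant vertices. For the present corollary the split-graph step is exactly what would destroy bipartiteness and planarity, so I would drop it and attach pendants directly to the input graph. Since a HIST can be verified in polynomial time, membership in NP is immediate, and each reduction below is clearly polynomial, so it suffices to argue correctness together with preservation of the relevant class or parameter.

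Concretely, given an instance $(G,s,t)$ of $s$-$t$ Hamiltonian path I would build $H$ by attaching a single pendant to every vertex of $V(G)\setminus\{s,t\}$ and no pendant to $s$ or $t$. The claim is that $H$ has a HIST iff $G$ has an $s$-$t$ Hamiltonian path, and the argument is the counting proof of \Cref{thm:NP:diameter} re-run for this pendant pattern. For the forward direction, an $s$-$t$ Hamiltonian path together with all pendant edges gives a spanning tree in which $s,t$ are leaves, every internal path vertex has degree $2+1=3$, and every pendant is a leaf, i.e.\ a HIST. For the converse, in any HIST $T$ each pendant edge is forced, so deleting the pendants yields a spanning tree $T'$ of $G$; writing $n=|V(G)|$, the endpoints $s,t$ have degree $\ge 1$ in $T'$ while each of the other $n-2$ vertices has degree $\ge 2$ in $T'$ (its pendant contributes $1$ and a HIST forbids degree $2$). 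As $\sum_{v}\deg_{T'}(v)=2(n-1)$, all these bounds are tight, so $T'$ is a tree with exactly the two leaves $s,t$ and all other vertices of degree $2$, that is, an $s$-$t$ Hamiltonian path of $G$.

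It then remains to check that this pendant attachment preserves each structural feature. A pendant has degree $1$ and hence lies on no induced cycle, so the induced cycles of $H$ are exactly those of $G$; placing each pendant on the side of the bipartition opposite its neighbor preserves chordal bipartiteness, and drawing each pendant inside a face incident to its neighbor preserves planarity. Each vertex gains at most one neighbor, so the maximum degree grows by at most one: from planar graphs of maximum degree $3$ \cite{sofsem/MeloFS21} we land in planar graphs of maximum degree $4$, and from chordal bipartite graphs \cite{tcs/HanakaK25} we stay chordal bipartite, giving NP-completeness in both cases. Finally, a $k$-expression for $G$ can be converted into one for $H$ using only $O(1)$ extra labels—introduce each vertex together with its private pendant while both are alone in fresh label classes, join them, and park the finished pendant in a reserved label that no later join touches—so $\cw(H)\le\cw(G)+O(1)$; combined with the W[1]-hardness of $s$-$t$ Hamiltonian path parameterized by clique-width \cite{tcs/HanakaK25}, this yields W[1]-hardness of HIST parameterized by clique-width.

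The main obstacle is reconciling two requirements in the converse direction: the pendants must be heavy enough to force every internal path vertex to have degree $\ge 3$, yet light enough that the maximum degree rises by only one (so that ``maximum degree $3$'' becomes ``maximum degree $4$'', not $5$). The theorem's choice of two pendants on the start vertex would push an endpoint's degree up by two; the fix is to put no pendant on either endpoint, letting it be a leaf of the HIST, which keeps the degree increase at exactly one while the tight degree count still pins $T'$ down to a path. A secondary, purely technical point is certifying the clique-width bound by exhibiting the modified expression, but since any additive constant suffices for an FPT-reduction with respect to clique-width, the precise constant is immaterial.
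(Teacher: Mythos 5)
Your proposal is correct and takes essentially the same approach as the paper: the paper's justification of \Cref{cor:NP:other} is exactly the observation that the reduction of \Cref{thm:NP:diameter}, with the split-graph step dropped and pendant vertices attached directly to the hard $s$-$t$ Hamiltonian path instances, preserves chordal bipartiteness and planarity while increasing maximum degree by one and clique-width by an additive constant. Your choice of attaching no pendants to $s$ and $t$ (rather than the theorem's two pendants on $s'$) and re-running the degree-counting argument is precisely the ``small modification'' the paper alludes to, and you have verified it correctly.
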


\section{Exact exponential-time algorithm}
A complete graph with \( n \) vertices has \( n^{n-2} \) spanning trees, and thus any \( n \)-vertex graph has at most \( n^{n-2} \) spanning trees. Since all such spanning trees can be enumerated with constant delay, the HIST existence problem can be solved in time \( O(n^{n-2} \cdot m) \), i.e., within \( 2^{O(n \log n)} \) time. We aim to design faster exact algorithms for this problem.

\begin{algorithm}[H]
\caption{Exact Algorithm for HIST}\label{alg:3}
\KwIn{A graph $G=(V,E)$ with $n$ vertices}
\KwOut{Yes if $G$ has a HIST, No otherwise}

Define $C[S][S_1][S_2] = 1$ if there exists a spanning tree of $G[S]$ where $S_1$ are degree-1 vertices and $S_2$ are degree-2 vertices; otherwise, 0\;

Initialize $C[S][S_1][S_2] \gets 0$ for all $S \subseteq V$, $S_1 \subseteq S$, $S_2 \subseteq S \setminus S_1$\;

\ForEach{$\{u,v\} \in E$}{
    Set $C[\{u,v\}][\{u,v\}][\emptyset] \gets 1$\;
}

\For{$s = 3$ to $n$}{
    \ForEach{subset $S \subseteq V$ with $|S| = s$}{
        \ForEach{$j \in S$, and $S_1 \subseteq S$ with $j \in S_1$, and $S_2 \subseteq S \setminus S_1$}{
            Let $S' = S \setminus \{j\}$ and $S_1' = S_1 \setminus \{j\}$\;
            \ForEach{$k \in N(j) \cap S$}{
                \uIf{$k \in S_2$}{
                    Set $C[S][S_1][S_2] \gets C[S][S_1][S_2] \lor C[S'][S_1' \cup \{k\}][S_2 \setminus \{k\}]$\;
                }
                \ElseIf{$k \in S \setminus (S_1 \cup S_2)$}{
                    Set $C[S][S_1][S_2] \gets C[S][S_1][S_2] \lor C[S'][S_1'][S_2 \cup \{k\}] \lor C[S'][S_1'][S_2]$\;
                }
            }
        }
    }
}

\ForEach{$S_1 \subseteq V$ with $|S_1| \ge \lceil n/2 \rceil$}{
    \If{$C[V][S_1][\emptyset] = 1$}{
        \Return Yes
    }
}
\Return No
\end{algorithm}
\Cref{alg:3} is a dynamic programming algorithm for the HIST existence problem. It defines a function \( C[S][S_1][S_2] \) that indicates whether there exists a spanning tree of the subgraph \( G[S] \), where the set \( S_1 \subseteq S \) consists of vertices of degree 1 and the set \( S_2 \subseteq S \) consists of vertices of degree 2 in the tree.

Such a tree must contain at least one degree-1 vertex. Suppose we choose \( j \in S \) as a leaf. It must connect to some neighbor \( k \in N(j) \cap S \). If \( k \in S_2 \), then the tree with \( j \) and \( k \) must be such that \( k \) is converted to a degree-1 vertex upon removing \( j \). If \( k \) has not yet been assigned a degree, it can be placed into either \( S_2 \) or left unassigned, depending on how the tree grows. These cases correspond to lines 10 and 13 of the algorithm.

Since the algorithm enumerates all subsets \( S \subseteq V \), and their subpartitions \( S_1, S_2 \), it runs in time \( 4^n n^{O(1)}  \).

\begin{theorem}
For an \( n \)-vertex graph, the existence of a HIST can be decided in \( 4^n n^{O(1)}  \) time.
\end{theorem}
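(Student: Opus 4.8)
The algorithm is already given as \Cref{alg:3}, so the plan is to (i) prove that the table $C$ correctly encodes the intended spanning-tree predicate and (ii) bound the number of table entries and the work per entry. The central invariant I would establish is the following: upon termination, $C[S][S_1][S_2] = 1$ if and only if $G[S]$ admits a spanning tree whose degree-$1$ vertices are exactly $S_1$, whose degree-$2$ vertices are exactly $S_2$, and all of whose remaining vertices have degree at least $3$. Granting this, the final sweep is correct, since a HIST of $G$ is exactly a spanning tree of $G = G[V]$ with no degree-$2$ vertex; hence $G$ has a HIST if and only if $C[V][S_1][\emptyset] = 1$ for some $S_1$. The restriction $|S_1| \ge \lceil n/2 \rceil$ loses no HIST: in a tree on $n$ vertices whose internal vertices all have degree at least $3$, handshaking gives $\sum_v d(v) = 2(n-1)$, and therefore the number $L$ of leaves satisfies $L + 3(n - L) \le 2(n-1)$, i.e. $L \ge (n+2)/2 \ge \lceil n/2 \rceil$.

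I would prove the invariant by induction on $s = |S|$. The base case $s = 2$ is the edge initialization: the only spanning tree of a two-vertex induced subgraph is an edge, both of whose endpoints have degree $1$. For the inductive step I would use the elementary fact that every tree on at least two vertices has a leaf, so any target spanning tree $T$ of $G[S]$ has some leaf $j \in S_1$; deleting $j$ leaves a spanning tree of $G[S \setminus \{j\}]$, and conversely re-attaching $j$ to a neighbor $k \in N(j) \cap S$ extends any spanning tree of $G[S \setminus \{j\}]$ back to one of $G[S]$. The recurrence enumerates this neighbor $k$ and tracks the resulting degree change: if $k$ had degree $2$ in $T$ (so $k \in S_2$), it becomes a degree-$1$ vertex after deleting $j$ (line 10); if $k$ had degree at least $3$ in $T$ (so $k \in S \setminus (S_1 \cup S_2)$), it either becomes a degree-$2$ vertex or remains of degree at least $3$, which are the two disjuncts of line 13. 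Checking that these degree transitions are matched exactly by the table indices in both directions (a tree realizing the left-hand entry yields a tree realizing one right-hand entry, and vice versa) closes the induction.

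For the running time, I would count table entries by noting that a triple $(S, S_1, S_2)$ with $S_1$ and $S_2$ disjoint subsets of $S$ is the same as a labelling of each of the $n$ vertices by one of four labels: outside $S$, in $S_1$, in $S_2$, or in $S \setminus (S_1 \cup S_2)$. Hence there are exactly $4^n$ entries, each updated by loops over a distinguished leaf $j \in S_1$ and a neighbor $k \in N(j)$ that contribute an $n^{O(1)}$ factor, for a total of $4^n n^{O(1)}$; the final sweep is dominated by this.

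The step I expect to need the most care is the completeness direction of the recurrence together with its case analysis on $k$. One must verify that, for $|S| \ge 3$, the two cases are exhaustive -- in particular that the neighbor $k$ of the deleted leaf cannot itself lie in $S_1$, because a degree-$1$ vertex adjacent to a leaf would form an isolated edge and disconnect the tree unless $|S| = 2$; this is exactly why no ``$k \in S_1$'' branch appears and why the induction begins at $s = 3$. One must also confirm that the exact bookkeeping of $S_1$ and $S_2$ is preserved under both leaf deletion and leaf reattachment in each sub-branch, and that processing subsets in increasing order of size guarantees that every referenced entry $C[S'][\cdot][\cdot]$ with $|S'| = |S| - 1$ is already finalized when $C[S][\cdot][\cdot]$ is updated.
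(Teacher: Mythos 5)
Your proposal is correct and follows essentially the same route as the paper: it establishes the same invariant for the table $C[S][S_1][S_2]$ of \Cref{alg:3} by induction on $|S|$ via leaf removal, with the identical case analysis on whether the deleted leaf's neighbor $k$ lies in $S_2$ or in $S \setminus (S_1 \cup S_2)$, and the identical $4^n$ count of entries as four labels per vertex. In fact, you supply details the paper leaves implicit, notably the handshaking bound justifying the $|S_1| \ge \lceil n/2 \rceil$ restriction in the final sweep and the argument that the case $k \in S_1$ cannot occur for $|S| \ge 3$.
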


This algorithm is rather straightforward, but a modified version can be used in other algorithms as a subroutine.

\section{FPT algorithms by Structural Graph Parameters}
In this section, we investigate the parameterized complexity of the HIST problem with respect to several structural graph parameters. Section~6.1 presents an $O^*(4^k)$-time algorithm parameterized by the modular-width $k$. The algorithm exploits the structure provided by the modular decomposition to normalize potential HISTs, and systematically explores the quotient graph using the exact algorithm developed in Section~5. To verify whether the degree constraints required for a HIST can be satisfied, the algorithm solves a feasibility problem formulated as a system of integer linear constraints. In Section~6.2, we briefly show that the problem is fixed-parameter tractable when parameterized by treewidth by providing an MSO$_2$ formulation. Section~6.3 develops an FPT algorithm parameterized by the cluster vertex deletion number, where we employ kernelization techniques to bound clique sizes while preserving the existence of a HIST.

\subsection{Parameterization by modular-width}
The modular-width of a graph is a structural parameter based on the concept of a module. A \emph{module} in a graph is a vertex subset such that every vertex outside the module is either adjacent to all vertices in the module or to none of them.
The \emph{modular decomposition} recursively partitions a graph into modules, forming a decomposition tree. Each node in the tree is classified as either a \emph{parallel node} (an independent set), a \textit{series node} (a clique), or a \emph{prime node} (neither).
The \emph{modular-width} is defined as the maximum size of a prime node in the modular decomposition tree.

Suppose that a graph $G$ is decomposed into modules $M_1, \ldots, M_k \subseteq V$. That is, for all $u, v \in M_i$ and $w \in V \setminus M_i$, if $\{u, w\} \in E$, then $\{v, w\} \in E$, and if $\{u, w\} \notin E$, then $\{v, w\} \notin E$.  
We assume that there exists a module consisting of a single vertex. If no such module exists, we can arbitrarily split one of the modules with at least two vertices into a singleton and the remainder. Although this increases the number of modules by one, as will be shown later, this has no impact on the computational complexity.

Concerning a HIST, we can show the following lemma. 
\begin{lemma}\label{lem:mod}
Suppose that the graph $G$ is decomposed into modules $M_0, \ldots, M_k \subseteq V$, where $|M_0| = 1$.  
If $G$ admits a HIST in which the vertex in $M_0$ is not a leaf, then there exists a HIST $T$ satisfying the following constraints:  
For each $i = 1, \ldots, k$, one of the following holds:
\begin{enumerate}
    \item All vertices in $M_i$ are leaves in $T$, i.e., $\forall v \in M_i: d_T(v) = 1$.
    \item There exists a vertex $u \in M_i$ such that $d_T(u) \geq 3$, and all other vertices in $M_i \setminus \{u\}$ are leaves in $T$, i.e., $\forall v \in M_i \setminus \{u\}: d_T(v) = 1$. In this case, one of the following holds:
    \begin{enumerate}
        \item[(a)] $d_T(u) = 3$ and the degree of $u$ within $T[M_i]$ is 1.
        \item[(b)] $d_T(u) \geq 3$ and the degree of $u$ within $T[M_i]$ is 0.
    \end{enumerate}
\end{enumerate}
\end{lemma}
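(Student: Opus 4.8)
The plan is to start from an arbitrary HIST $T_0$ of $G$ in which the single vertex of $M_0$ is not a leaf, and to transform it module-by-module into a HIST $T$ of the required canonical form, without ever destroying the HIST property. The central observation I would exploit is that all vertices of a module $M_i$ have identical neighborhoods outside $M_i$; hence any vertex of $M_i$ can ``inherit'' the external tree-edges of any other vertex of $M_i$, since the corresponding edges all exist in $G$. This substitutability is what lets me concentrate, within each module, all the external branching onto a single representative vertex $u$ while demoting every other vertex of $M_i$ to a leaf.

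First I would fix a module $M_i$ and analyze the edges of $T_0$ incident to $M_i$, splitting them into \emph{internal} edges (both endpoints in $M_i$) and \emph{external} edges (exactly one endpoint in $M_i$). Because $T_0$ is a tree, the internal edges form a forest on $M_i$, and the external edges attach this forest to the rest of $T_0$. My goal is to reroute all external attachments so that at most one vertex $u \in M_i$ carries external edges and the rest become leaves. The natural choice is to pick $u$ to be a vertex of $M_i$ that already has maximal degree in $T_0$ (or, if all of $M_i$ consists of leaves, to stay in Case 1). I would then redirect every external edge currently incident to some $v \in M_i \setminus \{u\}$ so that it instead goes to $u$; this is legal precisely because $v$ and $u$ share the same outside neighbors. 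One must check that this rerouting keeps the graph connected and acyclic: connectivity is maintained because we never disconnect the outside endpoint from $u$, and acyclicity must be verified by arguing that collapsing $M_i$'s external attachments onto a single vertex cannot create a cycle, since outside the module the tree structure is untouched and inside the module we will simultaneously discard the internal forest edges (attaching the demoted vertices as pendants of $u$ or of an outside neighbor).

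The degree bookkeeping then yields the case distinction. After concentrating external edges on $u$, every other vertex of $M_i$ is attached by exactly one edge, making it a leaf, which gives the ``$\forall v \in M_i \setminus \{u\}: d_T(v)=1$'' clause. For $u$ itself I would look at whether $u$ retains an internal edge. If $u$ keeps exactly one internal tree-edge (to some demoted-but-now-pendant neighbor in $M_i$), then together with at least two external edges its total degree is at least $3$ and its degree inside $T[M_i]$ is $1$; tightening to the minimal configuration gives subcase (a) with $d_T(u)=3$. If instead $u$ has no internal edge at all, every neighbor of $u$ lies outside $M_i$, the degree within $T[M_i]$ is $0$, and the HIST constraint forces $d_T(u) \geq 3$, which is subcase (b). The hypothesis that the vertex of $M_0$ is not a leaf is what guarantees enough external ``anchoring'' so that a representative with $d_T(u)\ge 3$ can always be realized rather than being forced into a forbidden degree-$2$ state.

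The main obstacle I anticipate is the acyclicity and connectivity verification during the rerouting step: naively moving external edges onto $u$ while deleting internal forest edges could, in a careless ordering, momentarily disconnect a sub-branch that was hanging off a demoted vertex $v$, or could create a cycle if $u$ and $v$ had external edges to the same outside neighbor component. I would handle this by processing the modules one at a time and, within a module, by treating the contraction of $M_i$ in the quotient tree as the invariant to preserve: the quotient graph $G/\{M_1,\dots,M_k\}$ with $M_0$ as a distinguished vertex has a well-defined spanning-tree structure induced by $T_0$, and any reassignment of external edges that respects this quotient spanning tree is automatically acyclic and connected. Phrasing the transformation in terms of the quotient tree is, I expect, the cleanest way to make the rerouting rigorous while keeping the degree computations transparent.
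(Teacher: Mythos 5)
Your core intuition---that vertices of a module are externally interchangeable, so all branching inside $M_i$ can be concentrated on a single representative---is exactly the paper's starting point, but your execution has a genuine gap: the invariant you rely on for acyclicity does not exist. You claim that $T_0$ induces ``a well-defined spanning-tree structure'' on the quotient graph, but the image of a spanning tree under contraction of a module is a tree only when the module induces a \emph{connected} subforest of $T_0$; in general it is a connected spanning subgraph that may contain cycles. Concretely, take a HIST $T_0$ rooted at $r$ (the vertex of $M_0$) with children $a,w_1,w_7$, where $a$ has children $w_2,w_3$, $w_1$ has children $b,w_4$, and $b$ has children $w_5,w_6$, and let $M_i=\{a,b\}$ be a module. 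Contracting $M_i$ yields a triangle on $\{M_i,r,w_1\}$, and your rule ``redirect every external edge of $b$ to $a$'' adds the edge $\{a,w_1\}$ while $\{a,r\}$ and $\{r,w_1\}$ are still present: you have created a cycle, and nothing in your plan detects or repairs this. The paper avoids it precisely by \emph{not} rerouting all external edges: it roots $T$ at the vertex of $M_0$, chooses as representative the degree-$\ge 3$ vertex $u$ of $M_i$ \emph{closest to the root} (your ``maximal degree'' choice can even place $u$ inside a subtree you are about to move, which is fatal), keeps the parent edge of the demoted vertex $v$ (so $v$ becomes a leaf and needs no reattachment), and moves only the child subtrees of $v$: those rooted outside $M_i$ go to $u$, those rooted inside $M_i$ go to $u$'s parent $u'$. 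By the depth-minimal choice of $u$, both $u$ and $u'$ lie outside every moved subtree, so acyclicity is automatic.

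Two further details that you gloss over are load-bearing. First, $M_i$ need not induce a clique (it may induce no edges at all), so ``attaching the demoted vertices as pendants of $u$'' may be impossible inside the module; this is exactly why the paper sends the internal children of $v$ to $u'$, which lies outside $M_i$ and hence is adjacent to all of $M_i$, and why the lemma's case 2 carefully distinguishes internal degree $1$ from internal degree $0$. Second, the hypothesis that the vertex of $M_0$ is not a leaf is not vague ``anchoring'': its precise role is to guarantee that $u'$---which may be the root---has degree $\ge 3$ (every other non-leaf already has degree $\ge 3$ by the HIST property), so that reattaching subtrees or pendants to $u'$ never produces a degree-$2$ vertex. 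Your procedure, by contrast, allows attaching a demoted vertex to an arbitrary ``outside neighbor,'' which, if that neighbor is a leaf, instantly creates a forbidden degree-$2$ vertex; and your ``tightening to the minimal configuration'' for case (a) (internal degree $1$ forces total degree exactly $3$) is asserted rather than proved---when $u$ has internal degree $1$ and external degree greater than $2$, one must offload the internal pendant onto an outside vertex of degree $\ge 3$, and exhibiting such a vertex is again the job performed by $u'$ in the paper's case analysis.
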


\begin{proof}
It suffices to show that if the HIST of $G$ does not satisfy the conditions stated in the lemma, then it can be transformed into another HIST $T'$ that does satisfy them.

First, let the vertex in $M_0$ be the root $r$ (parent-child relationships are now defined). Suppose there exists a module $M_i$ ($i \geq 1$) that satisfies neither condition 1 nor condition 2. We divide the cases based on the number $l = |\{u \in M_i \mid d_T(u) \geq 3\}|$ of vertices in $M_i$ with degree at least 3.

\noindent\textbf{Case (i) $l = 1$:}  
Focus on the unique vertex $u \in M_i$ with $d_T(u) \geq 3$. Since $u$ is the only vertex of degree at least 3 in $M_i$, it must have a parent vertex $u'$ outside the module in $T$.  
Since $u$ does not satisfy condition 2, either of the following holds:  
(i-1) $d_T(u) > 3$ and $d_{T[M_i]}(u) = 1$, or  
(i-2) $d_{T[M_i]}(u) \geq 2$.

\textbf{(i-1):}  
Since $d_T(u) - d_{T[M_i]}(u) \geq 3$, we can reconnect all pendant vertices in $M_i \setminus \{u\}$ that were connected to $u$ directly to $u'$ while maintaining degree at least 3 for $u$.  
In the resulting tree, $M_i$ satisfies condition 2(b).

\textbf{(i-2):}  
At least two vertices in $M_i \setminus \{u\}$ are connected to $u$ as pendant vertices. We further divide into the following cases based on $d_T(u) - d_{T[M_i]}(u)$:
\begin{itemize}
    \item If $d_T(u) - d_{T[M_i]}(u) \geq 3$, we can apply the same argument as in (i-1) to transform $M_i$ to satisfy condition 2(b).
    \item If $d_T(u) - d_{T[M_i]}(u) = 1$, $u$ is connected only to its parent $u'$ and pendant vertices. By reconnecting all pendant vertices from $u$ to $u'$ without losing connectivity, $u$ itself becomes a pendant vertex of $u'$. In this way, we can transform $M_i$ to satisfy condition 1 while preserving the HIST structure.
    \item If $d_T(u) - d_{T[M_i]}(u) = 2$, since $d_T(u) \geq 4$, we can reconnect at least one pendant vertex from $M_i \setminus \{u\}$ to $u'$ while maintaining degree at least 3 for $u$.  
    In the resulting tree, if $d_T(u) = 4$, $M_i$ satisfies condition 2(a). If $d_T(u) > 4$, the situation reduces to subcase (i-1), which ultimately satisfies condition 2(b).
\end{itemize}

\noindent\textbf{Case (ii) $l \geq 2$:}  
This may hold only for $M_1,\ldots, M_k$. 
Select the vertex with degree at least 3 in $M_i$ that is closest to the root $r$ and denote it by $u$.  
Let $u'$ be its parent. 
Then, $d_T(u')\ge 3$ holds, 
since this is assumed for the root and true for all remaining non-leaves due to the HIST property of $T$. 
Since there is at least one other vertex of degree at least 3, choose such a vertex and denote it by $v$. Let $W$ be the set of children of $v$.

For each $w \in W \cap M_i$, reconnect the edge $\{v, w\}$ to $\{u', w\}$. For each $w \in W \setminus M_i$, reconnect $\{v, w\}$ to $\{u, w\}$. These reconnections maintain the tree structure, as the subtrees rooted at each $w$ can now be attached under $u'$ or $u$. Through this operation, the degrees of $u'$ and $u$ increase, while the degree of $v$ becomes 1. The degrees of all other vertices remain unchanged.
Since this operation reduces $l$ by 1, by repeatedly applying this process, we can eventually achieve $l = 1$, reducing the situation to case (i).

Thus, in all cases, the HIST can be transformed to satisfy the conditions stated in the lemma. \qed
\end{proof}

Based on this lemma, we further see how a HIST may form in the \emph{quotient graph}.
Suppose that $G$ is decomposed into a set of modules $M_0, \ldots, M_k$ such that $|M_0| = 1$. If $G$ admits a HIST $T$, then from Lemma \ref{lem:mod}, each module contains at most one vertex of degree at least 3 in $T$. We refer to such a vertex as the \emph{representative vertex} of the module. Consider the quotient graph $H = (\mathcal{M}, \mathcal{E})$ where $\mathcal{M} = \{M_i \mid i = 0, \ldots, k\}$ represents the modules of $G$.  
Let $\mathcal{M'}$ denote the set of modules that contain a representative vertex.  
The subgraph defined by the connections between representative vertices forms a tree.

Note that, in this structure, the degrees of the representative vertices themselves do not reach three if we consider only the connections between representatives.  
The HIST is completed by connecting pendant vertices to the representative vertices.  
Furthermore, within each module, it suffices to assume that exactly one vertex is connected as a pendant vertex.

To formalize these conditions, we partition the set $\mathcal{M'}$ in the quotient graph $H = (\mathcal{M}, \mathcal{E})$ (where $\mathcal{M} = \{M_i \mid i = 0, \dots, k\}$) as follows.  
For a spanning tree $T$ of the subgraph $H[\mathcal{M'}]$, we divide $\mathcal{M'}$ into:
\begin{align*}
\mathcal{M}_1 & := \{M \in \mathcal{M'} \mid d_T(M) = 1\}, \\    
\mathcal{M}_2 & := \{M \in \mathcal{M'} \mid d_T(M) = 2\}, \\
\mathcal{M}_3 & := \{M \in \mathcal{M'} \mid d_T(M) \geq 3\}.
\end{align*}
Let $\mathcal{M}^{\bot}$ denote the set of modules that form independent sets.

The following conditions must hold:
\begin{align}\label{eq:assign1}
    \forall M_i \in \mathcal{M}_1: & \sum_{M_j : \{M_i, M_j\} \in \mathcal{E}} x_{ji} + x_{ii} \geq 2 \\ \label{eq:assign2}
    \forall M_i \in \mathcal{M}_2: & \sum_{M_j : \{M_i, M_j\} \in \mathcal{E}} x_{ji} + x_{ii} \geq 1 \\ \label{eq:assign3}
    \forall M_i \in \mathcal{M'}: & \sum_{M_j : \{M_i, M_j\} \in \mathcal{E}} x_{ij} + x_{ii} = |M_i| - 1 \\ \label{eq:assign4}
    \forall M_i \in \mathcal{M} \setminus \mathcal{M'}: & \sum_{M_j : \{M_i, M_j\} \in \mathcal{E}} x_{ij} = |M_i| \\ \label{eq:assign5}
    \forall M_i \in \mathcal{M'}: & x_{ii} \in \begin{cases}
      \{0\}  & \text{if } M_i \in \mathcal{M}^{\bot} \\
      \{0,1\} & \text{otherwise}
    \end{cases}.
\end{align}

Here, $x_{ij}$ denotes the number of vertices in module $M_i$ that are connected as pendant vertices to the representative vertex of module $M_j$.  
Also, $x_{ii}$ represents the number of pendant vertices connected from within module $M_i$ to its own representative vertex. From Lemma \ref{lem:mod}, we know that $x_{ii}$ is at most one.  

Note that if the module itself is not an independent set, such an assignment is always feasible (if necessary). This feasibility is precisely expressed by condition (\ref{eq:assign5}).

\medskip

This assignment problem determines whether a HIST of $G$ can be constructed based on the spanning tree $T$ of $H[\mathcal{M}']$.  
In the desired HIST, the representative vertices of $\mathcal{M}'$ must have degree 3.  
However, the representative vertices in the modules of $\mathcal{M}_1$ and $\mathcal{M}_2$ can only attain degrees of 1 or 2 by using the edges present in $T$ alone.  
Therefore, this formulation verifies whether it is possible to assign pendant vertices to achieve the required degrees appropriately.

The right-hand sides of constraints (\ref{eq:assign1}) and (\ref{eq:assign2}) represent the required number of pendant vertices for each module, while the left-hand sides specify from which modules these pendant vertices can be assigned.
Constraints (\ref{eq:assign3}) and (\ref{eq:assign4}) ensure that the total number of assigned pendant vertices equals $|M_i| - 1$ (excluding the representative vertex) or $|M_i|$ (when all vertices are assigned as pendants), respectively.
If this assignment problem has a feasible solution, then by choosing the representative vertices as the internal vertices of the tree $T$ and connecting the remaining vertices according to the assignment solution as pendant vertices, we can construct a HIST of $G$.

It should be noted that the above discussion focuses solely on the fact that $T$ is a spanning tree of $H$ and on the degrees within each module. From this observation, the following result holds.

\begin{lemma}\label{lem:mod2}
Suppose that the graph $G$ is decomposed into a set of modules $\mathcal{M} = \{M_i \subseteq V \mid i = 0, \ldots, k\}$, where $|M_0| = 1$.  
If there exists a subset of modules $\mathcal{M}' \subseteq \mathcal{M}$ and a partition $(\mathcal{M}_1, \mathcal{M}_2, \mathcal{M}_3)$ of $\mathcal{M}'$ satisfying the following two conditions, then $G$ admits a HIST:
\begin{enumerate}
    \item There exists a spanning tree $T$ of the subgraph $H[\mathcal{M}']$ of the quotient graph $H = (\mathcal{M}, \mathcal{E})$ of $G$ such that $\mathcal{M}_1 := \{M \in \mathcal{M}' \mid d_T(M) = 1\}$, $\mathcal{M}_2 := \{M \in \mathcal{M}' \mid d_T(M) = 2\}$, and $\mathcal{M}_3 := \{M \in \mathcal{M}' \mid d_T(M) \geq 3\}$.
    \item The assignment problem defined by constraints (\ref{eq:assign1})--(\ref{eq:assign5}) has a feasible solution.
\end{enumerate}
\end{lemma}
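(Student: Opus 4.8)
The plan is to prove \Cref{lem:mod2} by an explicit construction: given the hypothesized spanning tree $T$ of $H[\mathcal{M}']$ together with a feasible solution $(x_{ij})$ to the assignment system \eqref{eq:assign1}--\eqref{eq:assign5}, I will assemble an actual spanning tree of $G$ and verify directly that it has no degree-2 vertex. First I would fix, for each module $M_i \in \mathcal{M}'$, a designated \emph{representative vertex} $r_i \in M_i$; when $M_i \notin \mathcal{M}^{\bot}$ (so $M_i$ induces at least one edge) I choose $r_i$ so that it has a neighbor inside $M_i$, which makes the option $x_{ii} = 1$ realizable, exactly as condition \eqref{eq:assign5} records. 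The backbone of the constructed tree is the image of $T$: whenever $\{M_i, M_j\} \in \mathcal{E}$ is an edge of $T$, I insert the edge $\{r_i, r_j\}$ into $G$; this is legitimate because $M_i, M_j$ are modules joined by all-or-nothing adjacency, so $\{M_i,M_j\}\in\mathcal{E}$ means every vertex of $M_i$ is adjacent to every vertex of $M_j$, and in particular $\{r_i,r_j\}\in E(G)$. Since $T$ is a tree on the vertex set $\mathcal{M}'$, this backbone is a tree on $\{r_i : M_i \in \mathcal{M}'\}$.

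Next I would attach every remaining vertex of $G$ as a pendant, guided by the counters $x_{ij}$. The intended reading is that $x_{ij}$ (for $i\neq j$, with $\{M_i,M_j\}\in\mathcal{E}$) is the number of non-representative vertices of $M_i$ to be hung off $r_j$, and $x_{ii}\in\{0,1\}$ is the number hung off $r_i$ itself. For each $M_i \in \mathcal{M}'$ I partition $M_i \setminus \{r_i\}$ into groups of sizes $x_{ij}$ and $x_{ii}$; the total size is correct precisely by constraint \eqref{eq:assign3}, which forces $\sum_j x_{ij} + x_{ii} = |M_i|-1$. For each $M_i \in \mathcal{M}\setminus\mathcal{M}'$ (a module with no representative, i.e. one whose vertices must all become leaves) I partition all of $M_i$ into groups of sizes $x_{ij}$, with correctness guaranteed by \eqref{eq:assign4}. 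Again the module property makes every such pendant edge $\{v, r_j\}$ present in $G$ whenever $\{M_i,M_j\}\in\mathcal{E}$. Because each attached vertex gets exactly one edge, to an already-connected representative, the result stays a spanning tree of $G$.

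The crux is the degree check, so I would treat it carefully. Every vertex attached as a pendant has degree $1$, which is allowed. For a representative $r_i$, its degree is $d_T(M_i)$ (from the backbone) plus the number of pendants attached to it, namely $\sum_{M_j:\{M_i,M_j\}\in\mathcal{E}} x_{ji} + x_{ii}$. If $M_i \in \mathcal{M}_3$ then $d_T(M_i)\ge 3$ already, so $r_i$ is safely of degree at least $3$. If $M_i \in \mathcal{M}_2$ then $d_T(M_i)=2$, and \eqref{eq:assign2} supplies at least one extra pendant, giving degree at least $3$; if $M_i \in \mathcal{M}_1$ then $d_T(M_i)=1$, and \eqref{eq:assign1} supplies at least two extra pendants, again giving degree at least $3$. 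Thus no representative has degree exactly $2$, and every non-representative vertex is a leaf, so the tree is homeomorphically irreducible.

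I expect the main obstacle to be bookkeeping rather than any deep idea: I must confirm that the various $x_{ij}$ groups can be realized as an honest partition of the actual vertices (not merely that the counts add up), that the counter $x_{ii}$ can be realized as a genuine internal edge of $T[M_i]$ precisely when $M_i\notin\mathcal{M}^{\bot}$ (this is where \eqref{eq:assign5} and the earlier normalization from \Cref{lem:mod} — giving the distinction between cases 2(a) and 2(b) — are used), and that no double counting occurs in the representative degree formula between backbone edges and pendant edges. A secondary point worth stating explicitly is that the hypothesis $|M_0|=1$ together with $M_0$ carrying a representative forces the root's module to contribute exactly its single vertex as an internal node, matching the ``vertex in $M_0$ is not a leaf'' setup of \Cref{lem:mod}; I would remark that this is consistent but not needed for the construction itself, since the degree argument above is self-contained.
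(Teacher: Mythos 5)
Your proposal is correct and follows essentially the same route as the paper: the paper justifies \Cref{lem:mod2} by the discussion preceding it, namely realizing the quotient-tree $T$ via one representative vertex per module of $\mathcal{M}'$ and attaching all remaining vertices as pendants according to a feasible solution $(x_{ij})$, with constraints (\ref{eq:assign1})--(\ref{eq:assign2}) boosting the degree-1 and degree-2 representatives to degree at least 3 and constraints (\ref{eq:assign3})--(\ref{eq:assign5}) guaranteeing the attachment is a well-defined spanning assignment. Your write-up merely makes the same construction explicit (choice of $r_i$ with an internal neighbor when $M_i \notin \mathcal{M}^{\bot}$, the module property ensuring all needed edges exist, and the degree verification), which is a faithful elaboration rather than a different argument.
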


Based on Lemma \ref{lem:mod2}, we can determine the existence of a HIST in $G$ by exhaustively enumerating all subsets $\mathcal{M}' \subseteq \mathcal{M}$, considering all possible partitions $(\mathcal{M}_1, \mathcal{M}_2, \mathcal{M}_3)$, verifying the existence of a spanning tree $T$ in $H[\mathcal{M}']$ that satisfies condition (1), and solving the corresponding assignment problem to check condition (2).  

The former can be verified using a slight modification of \Cref{alg:3}. The latter can be solved in polynomial time, which leads to the following theorem.

\begin{theorem}\label{thm:mw}
Given an $n$-vertex graph with modular-width $k$, the existence of a HIST can be determined in $O^*(4^k)$ time.
\end{theorem}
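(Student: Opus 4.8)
The plan is to turn the structural characterization of \Cref{lem:mod2} into a search that is driven entirely by the quotient graph $H=(\mathcal{M},\mathcal{E})$. By \Cref{lem:mod2} it suffices to find a quadruple consisting of a subset $\mathcal{M}'\subseteq\mathcal{M}$ of representative-bearing modules, a degree partition $(\mathcal{M}_1,\mathcal{M}_2,\mathcal{M}_3)$ of $\mathcal{M}'$, a spanning tree $T$ of $H[\mathcal{M}']$ inducing exactly that partition, and a feasible pendant assignment for constraints (\ref{eq:assign1})--(\ref{eq:assign5}). The key observation is that the first three objects are precisely what a small modification of \Cref{alg:3} computes: its table $C[S][S_1][S_2]$ already sorts the vertices of its input into degree-$1$ ($S_1$), degree-$2$ ($S_2$), degree-$\ge 3$ ($S\setminus(S_1\cup S_2)$) and unused ($V\setminus S$) classes. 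Running this DP on $H$ and reading $S=\mathcal{M}'$, $S_1=\mathcal{M}_1$, $S_2=\mathcal{M}_2$, $S\setminus(S_1\cup S_2)=\mathcal{M}_3$ simultaneously enumerates $\mathcal{M}'$, its partition, and certifies a spanning tree of $H[\mathcal{M}']$ realizing the partition. Since \Cref{alg:3} runs in $4^{|\cdot|}n^{O(1)}$ time and, at a prime node, the quotient graph has at most $k$ modules (plus the split-off singleton $M_0$), this stage costs $O^*(4^{k})$.

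For each of the $\le 4^{k+1}$ triples $(S,S_1,S_2)$ accepted by the DP, I would then test condition~(2), i.e.\ the feasibility of (\ref{eq:assign1})--(\ref{eq:assign5}). This is a transportation/degree-constrained bipartite feasibility problem: each $M_i\notin\mathcal{M}'$ (resp.\ each $M_i\in\mathcal{M}'$) must export $|M_i|$ (resp.\ $|M_i|-1$) of its vertices as pendants, each representative in $\mathcal{M}_1\cup\mathcal{M}_2$ demands enough incoming pendants to reach degree $3$, the admissible module-to-representative pairs are the edges of $\mathcal{E}$, and the self-loop variables $x_{ii}$ are restricted by (\ref{eq:assign5}) according to whether $M_i$ is an independent set. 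Modelling the $x_{ij}$ as arc capacities in an auxiliary network lets me decide feasibility by a single max-flow computation, so condition~(2) is checked in polynomial time per triple and the overall running time stays $O^*(4^{k})$.

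Two directions of correctness then have to be argued. Soundness is immediate from \Cref{lem:mod2}: any triple passing both the DP and the flow test yields a HIST of $G$. For completeness I would invoke \Cref{lem:mod}: every HIST of $G$ can be normalized so that each module contributes at most one non-leaf (its representative) with the prescribed internal degree, and this normalized HIST exhibits exactly one triple together with a feasible assignment that the algorithm will discover; the mild ``root is not a leaf'' hypothesis of \Cref{lem:mod} is met by designating the singleton $M_0$ inside a module that carries an internal vertex of the HIST, which exists whenever $|V|\ge 3$.

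The main obstacle I anticipate is the series (and parallel) nodes, where the definition of modular-width gives no bound on the number of children, so the quotient graph may have far more than $k$ vertices and the $4^{|\mathcal{M}|}$ DP is too slow. Here I would exploit that the quotient of a parallel node is an independent set (hence $G$ is disconnected and admits no HIST unless it is trivial) and that the quotient of a series node is a complete graph, in which every subset $\mathcal{M}'$ induces a clique and therefore realizes spanning trees of every admissible degree sequence. Consequently modules sharing the same pair $(|M_i|,\ \text{independent-or-not})$ become interchangeable, and I would group them into $O(n)$ profiles and replace the exponential enumeration by a polynomial dynamic program over how many representatives of each profile are chosen and which tree-degrees they take, feeding the result into the same flow-feasibility test. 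Verifying that this grouping preserves both spanning-tree realizability and assignment feasibility, and that it dovetails with the normalization of \Cref{lem:mod}, is the delicate part; the remaining bookkeeping---adapting \Cref{alg:3} to emit degree classes rather than a Boolean and wiring in the flow check within the $O^*(4^{k})$ budget---is routine.
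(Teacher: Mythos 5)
Your proposal is correct and follows the paper's own proof essentially step for step: normalize via \Cref{lem:mod}, reduce to \Cref{lem:mod2}, enumerate the realizable triples $(\mathcal{M}',\mathcal{M}_1,\mathcal{M}_2,\mathcal{M}_3)$ by running the (slightly modified) \Cref{alg:3} on the quotient graph in $O^*(4^k)$ time, and decide feasibility of (\ref{eq:assign1})--(\ref{eq:assign5}) in polynomial time per triple (the paper leaves the method unspecified; your max-flow formulation is a valid instantiation). The one place you diverge is the final paragraph on series and parallel nodes: the paper's proof does not treat this case at all---it tacitly assumes the top-level modular partition has at most $k+1$ parts, which holds only when the root of the decomposition tree is prime---so your (admittedly unverified) profile-grouping sketch addresses a definitional edge case that the published argument glosses over, rather than a step you were required to supply to match it.
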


\subsection{Parameterization by treewidth}

In this section, we demonstrate the fixed-parameter tractability of computing a HIST for treewidth. To this end, it is sufficient to give an appropriate MSO$_2$ formulation. 

It is well-known that verifying an edge subset $F$ is a tree can be expressed as an MSO$_2$ formula $\tree(F)$~\cite{dimacs/Courcelle96}. Then, a constant-length MSO$_2$ formula verifying $F$ is a HIST can be expressed as follows:    
\begin{align*}
    \varphi(F) &= \tree(F)
    \land \big(\forall v\in V.\ 
    \exists e_1\in F.\ \inc(v,e_1)\\
    & \hspace{2.3cm} \land 
    (\exists e_1,e_2\in F.\ (e_1\neq e_2)\land \inc(v,e_1)\land \inc(v,e_2)))\\
    &\hspace{2.5cm} \implies \exists e_1,e_2,e_3\in F.\   (e_1\neq e_2)\land  (e_2\neq e_3)\land  (e_3\neq e_1)\\
    &\hspace{4cm}\land \inc(v,e_1)\land \inc(v,e_2)\land \inc(v,e_3) \big)
\end{align*}
The first line guarantees that $F$ is a tree and $F$ spans $V$, i.e., $F$ is a spanning tree in $G$. 
The second, third, and fourth lines mean that if a vertex is incident to two edges $e_1,e_2$, then it is incident to three edges $e_1,e_2,e_3$. Since every vertex has at least one edge in $F$ by the first line, this guarantees that every vertex has either exactly one edge or at least three edges. 
Therefore, $\varphi(F)$ verifies whether $F$ is a HIST.
Since the length of $\varphi$ is constant, by Courcelle's theorem~\cite{iandc/Courcelle90,jal/ArnborgLS91}, finding a HIST is fixed-parameter tractable when parameterized by treewidth.

\begin{theorem}\label{thm:treewidth}
    Finding a HIST is fixed-parameter tractable when parameterized by treewidth.
\end{theorem}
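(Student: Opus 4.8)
The plan is to exhibit a fixed-length MSO$_2$ formula $\varphi(F)$ over a free edge-set variable $F$ that holds exactly when $F$ is the edge set of a HIST of $G$, and then invoke Courcelle's theorem, which states that any graph property expressible by a constant-size MSO$_2$ sentence can be decided in FPT time parameterized by treewidth. Since the theorem we are asked to prove is precisely fixed-parameter tractability with respect to treewidth, the entire task reduces to writing down such a formula and arguing that its length does not depend on the input graph.

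First I would assume, as is standard and as the excerpt indicates, that the predicate $\tree(F)$ expressing ``$F$ is the edge set of a spanning tree of $G$'' is already available as a constant-length MSO$_2$ formula, together with the incidence predicate $\inc(v,e)$ meaning vertex $v$ is an endpoint of edge $e$. The defining feature of a HIST beyond being a spanning tree is the absence of degree-$2$ vertices, equivalently that every vertex has degree exactly $1$ or at least $3$ in $F$. I would express this purely locally at each vertex: since $\tree(F)$ already forces every vertex to have degree at least $1$ (a spanning tree touches all vertices), it suffices to add the implication ``if $v$ is incident to at least two distinct edges of $F$, then $v$ is incident to at least three distinct edges of $F$.'' Counting up to three is fine in MSO$_2$ because it only requires existentially quantifying a bounded number of edge variables together with pairwise-inequality constraints, so the whole degree condition is a constant-length formula.

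The key steps, in order, are: (1) recall $\tree(F)$ and $\inc(v,e)$ as constant-length MSO$_2$ formulas; (2) form the conjunction of $\tree(F)$ with the universally-quantified local degree constraint described above; (3) verify semantic correctness, namely that a satisfying $F$ spans $V$ and is a tree by the first conjunct, and that the second conjunct rules out exactly the degree-$2$ vertices; and (4) observe that the resulting formula $\varphi(F)$ has length independent of $G$, so Courcelle's theorem applies and yields the FPT algorithm. The correctness argument in step (3) is genuinely routine: the spanning-tree guarantee gives each vertex degree $\ge 1$, and the implication upgrades any vertex of degree $\ge 2$ to degree $\ge 3$, leaving precisely degrees in $\{1\}\cup\{3,4,\dots\}$, which is the HIST condition.

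The main obstacle is essentially bookkeeping rather than a conceptual difficulty: one must be careful that the ``at least three edges'' clause is stated with explicit pairwise distinctness of the three edge variables (otherwise a single incident edge could satisfy it vacuously), and that the bound $\ge 3$ genuinely follows from $\ge 2$ incidences rather than being counted from scratch. Because all quantifiers range over a fixed number of edge and vertex variables and no recursion or unbounded counting is needed, the constant-length requirement of Courcelle's theorem is met, and no further machinery is required. Thus the proof amounts to displaying $\varphi(F)$ and citing the theorem, exactly as the statement demands.
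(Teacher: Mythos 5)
Your proposal is correct and follows essentially the same route as the paper: a constant-length MSO$_2$ formula conjoining $\tree(F)$ with the local degree constraint ``incident to at least two distinct edges of $F$ implies incident to at least three pairwise distinct edges of $F$,'' followed by an appeal to Courcelle's theorem. The only cosmetic difference is that the paper adds an explicit conjunct $\exists e_1\in F.\ \inc(v,e_1)$ to enforce spanning rather than folding it into $\tree(F)$ as you do, which changes nothing in the argument.
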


For a more concrete FPT algorithm, we can design a dynamic programming algorithm on a nice tree decomposition. 
It is not difficult to show that we can get time bound $2^{O(\tw \log \tw)}n^{O(1)}$, where $\tw$ is the treewidth of an input graph, though we omit the details.

%we formulate HIST as an MSO$_2$ formula, showing that HIST is fixed-parameter tractable when parameterized by treewidth.

%\begin{theorem}\label{thm:treewidth}
%    Finding a HIST is fixed-parameter tractable when parameterized by treewidth.
%\end{theorem}
%\begin{proof}
%    It is well-known that verifying an edge subset $F$ is a tree can be expressed as an MSO$_2$ formula $\tree(F)$~\cite{dimacs/Courcelle96}. Then, a constant-length MSO$_2$ formula verifying $F$ is a HIST can be expressed as follows:

%\begin{align*}
%    \varphi(F) &= \tree(F)
%    \land \big(\forall v\in V.\ 
%    \exists e_1\in F.\ \inc(v,e_1)\\
%    & \hspace{2.3cm} \land 
%    (\exists e_1,e_2\in F.\ (e_1\neq e_2)\land \inc(v,e_1)\land \inc(v,e_2)))\\
%    &\hspace{2.5cm} \implies \exists e_1,e_2,e_3\in F.\   (e_1\neq e_2)\land  (e_2\neq e_3)\land  (e_3\neq e_1)\\
%    &\hspace{4cm}\land \inc(v,e_1)\land \inc(v,e_2)\land \inc(v,e_3) \big)
% \end{align*}
% The first line guarantees that $F$ is a tree and  $F$ spans $V$, i.e., $F$ is a spanning tree in $G$. 
% The second, third, and fourth line means that if a vertex is incident to two edges $e_1,e_2$, then it is incident to two edges $e_1,e_2,e_3$. Since every vertex has at least one edge in $F$ by the first line, this guarantees that every vertex has either exactly one edge or at least three edges. 
% Therefore, $\varphi(F)$ verifies whether $F$ is a HIST.
% Since the length of $\varphi$ is constant, by Courcelle's theorem~\cite{iandc/Courcelle90,jal/ArnborgLS91}, finding a HIST is fixed-parameter tractable when parameterized by treewidth.
% \end{proof}

\subsection{Parameterization by cluster deletion number}\label{sec:cvd} 
A graph in which each connected component is a complete graph (clique) is called a \emph{cluster graph}.  
The \emph{cluster vertex deletion number} of a graph $G = (V, E)$ is the minimum number of vertices that need to be deleted to make the graph a cluster graph.  
Formally, the cluster vertex deletion number $\cvd(G)$ is defined as follows:
\[
\cvd(G) = \min_{S \subseteq V} \{ |S| \mid G \setminus S \text{ is a cluster graph} \}.
\]

Let $S$ be a vertex subset such that $G \setminus S$ is a cluster graph. Suppose that $G \setminus S$ consists of a set of connected cliques $\mathcal{C}$.

\begin{lemma}\label{lem:cvd}
Let $G$, $S$, and $\mathcal{C}$ be as defined above, and suppose that $G$ admits a HIST.  
Then, there exists a HIST $T$ of $G$ satisfying the following:  
For each clique $C \in \mathcal{C}$ and for each partition of $C$ into true twin vertex sets $(M_1, \ldots, M_l)$, each $M_i$ contains at most one vertex of degree at least 3 in $T$; that is,
\[
|\{ u \in M_i \mid d_T(u) \geq 3 \}| \leq 1.
\]
\end{lemma}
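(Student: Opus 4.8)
The plan is to mimic the exchange argument used in the proof of \Cref{lem:mod}, since true twin vertex sets inside a clique $C$ behave exactly like modules: any two true twins $u,v$ in $M_i$ satisfy $N(u)\setminus\{v\}=N(v)\setminus\{u\}$, so swapping their roles in a spanning tree preserves adjacency and hence keeps the edge set a valid spanning tree of $G$. I would begin by fixing an arbitrary HIST $T$ of $G$ (which exists by hypothesis) and rooting it at some vertex $r\in S$ if $S\neq\emptyset$; if $S=\emptyset$ the graph is itself a cluster graph and the statement can be checked directly on each clique. Rooting gives the parent-child structure needed to describe local modifications.

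Next I would argue clique by clique and, within a clique $C$, twin-class by twin-class. Fix a partition $(M_1,\dots,M_l)$ of $C$ into true-twin sets and suppose for contradiction that some $M_i$ contains two vertices $u,v$ with $d_T(u)\ge 3$ and $d_T(v)\ge 3$. Because $u$ and $v$ are true twins, every neighbor of $v$ other than $u$ is also a neighbor of $u$; this is exactly the interchangeability property that let us move pendants between representative vertices in \Cref{lem:mod}. The key step is the reattachment operation: take the vertex among $\{u,v\}$ that is farther from the root (say $v$), detach all of $v$'s children, and reattach each such child $w$ to $u$ (which is legal since $\{u,w\}\in E(G)$ whenever $\{v,w\}\in E(G)$, as $w\neq u$). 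This raises $d_T(u)$, drops $d_T(v)$ toward $1$, and leaves all other degrees unchanged, so the result is still a spanning tree; and since every internal vertex still has degree at least $3$ (the degrees we lowered we lowered all the way to $1$), it is still a HIST.

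To make this a clean induction I would define the potential $\Phi(T)=\sum_{C,M_i}\max(0,\,|\{u\in M_i: d_T(u)\ge 3\}|-1)$, the total excess of high-degree vertices over the allowed one per twin class, and show each reattachment strictly decreases $\Phi$ while preserving the HIST property. Since $\Phi$ is a nonnegative integer, finitely many steps reach $\Phi(T)=0$, which is exactly the desired conclusion. I would present the argument for a single offending pair and note that it applies verbatim across all cliques and twin classes because the modifications inside one twin class touch no vertex outside that class except possibly the common neighbors, whose own degree-constraint status is unaffected.

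The main obstacle I anticipate is bookkeeping in the case where the chosen vertex $v$ is an ancestor of $u$ (so ``farther from the root'' must be interpreted carefully, or the parent edge of $v$ itself is involved), and more subtly the case where reattaching $v$'s children to $u$ could push $v$ to degree exactly $2$ rather than degree $1$ if $v$ still retains its edge to its own parent together with one stray child. To handle this I would be careful to reattach \emph{all} children of $v$ and verify that $v$ is then left incident only to its parent edge (degree $1$) — using the fact that any remaining child would itself be reattachable — thereby ruling out the forbidden degree-$2$ outcome and keeping $T$ a genuine HIST throughout.
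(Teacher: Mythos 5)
Your proposal is correct and matches the paper's proof essentially verbatim: both root the tree, take two true twins $u,v$ of degree at least $3$ in the same class, reattach \emph{all} children of the one not above the other (your ``farther from the root,'' the paper's ``$u<v$ in a fixed ordering'') onto its twin, which sends $d_T(v)$ to $1$ while keeping $d_T(u)\ge 3$, and iterate until no class has two high-degree vertices. Note that your anticipated ``ancestor'' obstacle is vacuous: a vertex at weakly greater depth can never be a proper ancestor of the other twin, so the reattachment can never create a cycle, and your resolution of the degree-$2$ worry (move all children, leaving only the parent edge) is exactly right.
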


\begin{proof}
We show that if a HIST $T$ of $G$ does not satisfy this condition, it can be transformed into one that does.  
Suppose that in some $C \in \mathcal{C}$ and some $M_i$, there exist at least two vertices of degree at least 3 in $T$.  
Let $u$ and $v$ be such vertices with $d_T(u) \geq 3$ and $d_T(v) \geq 3$.  
In the original graph $G$, $u$ and $v$ are true twins.

We consider $T$ as a rooted tree with a fixed ordering and assume without loss of generality that $u < v$ in this ordering.  
By reattaching all children of $v$ to $u$, the degree of $v$ becomes 1, the degree of $u$ increases by at least two, and the degrees of all other vertices remain unchanged.  
The tree remains connected after this operation.  
The new tree $T'$ is thus a HIST with one fewer vertex of degree at least 3 than $T$.

As long as there exists some $M_i$ containing two or more vertices of degree at least 3, we can repeatedly apply this operation.  
Eventually, we obtain a HIST that satisfies the desired condition. \qed
\end{proof}

From \Cref{lem:cvd}, in each clique $C$, for a module $M \subseteq C$, if $|M| \geq l + |S| + 3$, we can safely delete vertices from $M$ until $|M| = l + |S| + 1$.

\begin{claim}
Let $D \subseteq M \subseteq C$ be the set of deleted vertices, and let $G' = G \setminus D$ denote the graph after deleting $D$. Then, $G$ admits a HIST if and only if $G'$ admits a HIST. 
\end{claim}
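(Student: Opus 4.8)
The plan is to prove the two directions separately, with the forward direction (HIST of $G$ induces a HIST of $G'$) being the routine part and the reverse direction (lifting a HIST of $G'$ back to $G$) being where the real work lies. The key structural fact to exploit is that every vertex in $M$ is a true twin of every other vertex in $M$ (since $M$ is a module inside the clique $C$ of true twins), so all vertices of $M$ have identical neighborhoods outside themselves. This interchangeability is what will let us freely reassign edges among the vertices of $M$.

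For the forward direction I would argue as follows. Suppose $G$ admits a HIST $T$. By \Cref{lem:cvd} we may assume without loss of generality that $M$ contains at most one vertex, say $u^\star$, of degree at least $3$ in $T$, and every other vertex of $M$ is a leaf of $T$. Since we deleted $|M| - (l + |S| + 1) \geq 2$ vertices from $M$ to form $D$, and since at most one vertex of $M$ (namely $u^\star$) is a non-leaf, the deleted set $D$ contains at least one vertex that is a leaf of $T$; in fact I would choose $D$ to consist entirely of leaf vertices of $M$ in $T$ (this is possible precisely because $|D| \leq |M| - 1$ and only one vertex of $M$ is a non-leaf). Removing leaves from a HIST keeps it a HIST, provided no remaining vertex drops to degree $2$. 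The only vertex whose degree could be affected by deleting a leaf $w \in D$ is the neighbor of $w$ in $T$; I would need to verify that this neighbor retains degree $1$ or degree $\geq 3$ after the deletions. The safe approach is to delete the leaves of $D$ one at a time and, whenever deleting a leaf would drop its neighbor $p$ from degree $3$ to degree $2$, first use the true-twin structure to reattach one of $p$'s leaf-children elsewhere, or simply observe that a neighbor of a deleted leaf in $M$ is either $u^\star$ (which has degree $\geq l + |S| + 1 \geq 3$ even after losing the deleted leaves, since enough vertices of $M$ remain) or a vertex outside $M$ whose degree is unaffected by choosing $D \subseteq M$. Thus $T$ restricted to $G'$ is a HIST of $G'$.

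For the reverse direction, suppose $G'$ admits a HIST $T'$. Again by \Cref{lem:cvd} applied to $G'$, assume $M \setminus D$ has at most one vertex of degree $\geq 3$ in $T'$. The goal is to attach the deleted vertices $D$ back as leaves. Pick any vertex $u \in M \setminus D$ that is a non-leaf in $T'$ (one exists as long as $M \setminus D$ is nonempty and some vertex of $M \setminus D$ is internal; if all of $M \setminus D$ are leaves, I would instead attach each $w \in D$ to any neighbor that $w$ has in $G$, which exists because $w$ is a true twin of the remaining vertices of $M$ and those have neighbors). Since $w \in D$ is a true twin of $u$, the edge $\{u, w\}$ exists in $G$; attach each $w \in D$ as a leaf to $u$. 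This raises $d(u)$ and creates new leaves, so the result is still a HIST of $G$. The one subtlety is the boundary case where $M \setminus D$ consists entirely of leaves in $T'$; here I would first argue, using $|M \setminus D| = l + |S| + 1$, that there is room to promote one vertex of $M \setminus D$ to an internal vertex by rerouting edges through the common neighborhood, and the size bound $l + |S| + 1$ is exactly what guarantees enough vertices are available to realize the required degree-$\geq 3$ structure. The main obstacle I expect is this bookkeeping in the reverse direction: ensuring that reattaching the vertices of $D$ (and any necessary promotion of a leaf to an internal vertex) never forces any vertex to have degree exactly $2$, which is precisely where the threshold $l + |S| + 1$ must be used carefully rather than being a loose bound.
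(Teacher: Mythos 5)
Your plan has the right skeleton (forward: take $D$ to be leaves of $T$ and delete them, then repair; backward: reattach $D$ as leaves), but the forward direction contains a false step that hides exactly the difficulty the threshold $l+|S|+1$ exists to solve. You assert that the tree-neighbor of a deleted leaf $w\in D$ is ``either $u^\star$ \dots or a vertex outside $M$ whose degree is unaffected by choosing $D\subseteq M$.'' This is wrong: whatever the parent $p$ of $w$ in $T$ is---an internal vertex of $C\setminus M$, a vertex of $S$, or $u^\star$---deleting $w$ removes the tree edge $\{p,w\}$ and lowers $d_T(p)$ by one; whether $p$ lies in $M$ is irrelevant (you are conflating adjacency in $G$ with adjacency in $T$). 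Your claim that $d_T(u^\star)\ge l+|S|+1$ is likewise unsupported; \Cref{lem:cvd} only gives $d_T(u^\star)\ge 3$, and $u^\star$ itself can drop to degree $2$ if it hosts a leaf of $D$. So your plan never confronts the real problem: internal vertices hosting leaves of $D$ whose degree falls to exactly $2$. The paper's proof resolves this with a counting argument you are missing: by \Cref{lem:cvd} at most $l$ vertices of $C$ are internal in $T$, so the leaves of $D$ are hosted by at most $l+|S|$ vertices in total; each such host is adjacent in $G$ to some vertex of the true-twin class $M$ and hence to \emph{all} of $M$; and after deletion exactly $|M\setminus(D\cup\{v\})|=l+|S|$ leaves of $M$ survive, so one surviving leaf can be reassigned to each damaged host, restoring its degree to at least $3$. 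Your vague fallback (``reattach one of $p$'s leaf-children elsewhere'') gestures at a repair but gives no bound on how many repairs are needed nor where the repair leaves come from, which is precisely the content of the claim.

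Your reverse direction is correct in spirit but overcomplicated, and the fallback you propose for the ``all of $M\setminus D$ are leaves'' case would fail. No promotion or rerouting is needed: since $|D|=|M|-(l+|S|+1)\ge 2$, attaching \emph{all} of $D$ as leaves to one single vertex $u\in M\setminus D$ works even when $u$ is a leaf of $T'$, because its degree jumps from $1$ to $1+|D|\ge 3$; this is the paper's one-line argument. By contrast, attaching ``each $w\in D$ to any neighbor that $w$ has in $G$'' independently can leave a former leaf with exactly one new child, i.e., degree $2$.
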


\begin{claimproof}
Suppose that $G$ admits a HIST. From \Cref{lem:cvd}, there exists a HIST $T$ in which each module contains at most one vertex of degree at least 3.  
Without loss of generality, we can assume that the degree-3 vertex $v \in M$ is not included in $D$.

In $T$, each vertex in $D$ is adjacent as a leaf to at most $l + |S|$ vertices of degree at least 3.  
Consider the tree $T \setminus D$ in the graph $G' = G \setminus D$.  
If there exists a vertex in $G \setminus D$ that does not satisfy the degree condition of a HIST in $T \setminus D$, such a vertex must either be a degree-3 vertex within $C$ in $T$ or a vertex in $S$.  
The number of such vertices is at most $l + |S|$, and all are adjacent to all vertices in $M$ with degree 2.
Note that the vertices in $M$ are true twins and that the vertices in $D$ are leaves in $T$.  
Since $|M \setminus (D \cup \{v\})| = l + |S|$, we can reassign one leaf from $M \setminus D$ to each of these vertices, increasing their degrees to at least 3 and satisfying the degree conditions of a HIST. Therefore, $G'$ admits a HIST.

Conversely, if $G'$ admits a HIST $T'$, we can attach the vertices in $D$ as leaves to any vertices in $V \setminus D$ in $T'$.  
Since $|D| \geq 2$, the resulting tree is a HIST in $G$.
\end{claimproof}

Since the number of modules in each clique $C$ of $G \setminus S$ is at most $2^{\cvd}$, we obtain the following.

\begin{lemma}\label{lem:cvd2}
By applying the above reduction, the size of the maximum clique in $G$ can be reduced to at most $2^{\cvd}(2^{\cvd} + 3 + \cvd) + \cvd$.
\end{lemma}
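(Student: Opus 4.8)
The statement is a counting bound, so the plan is to combine the structural reduction already justified by the Claim with two elementary counting facts: an upper bound on the number of true-twin modules inside a single clique, and the per-module size bound the reduction guarantees. First I would bound $l$, the number of true-twin classes inside any fixed clique $C \in \mathcal{C}$. Since $C$ is a clique, every pair of its vertices is adjacent, so true-twinhood inside $C$ is governed entirely by the neighborhoods \emph{outside} $C$. Because $G \setminus S$ is a cluster graph, it has no edges between distinct cliques, so the only vertices outside $C$ that a vertex of $C$ can see lie in $S$. Hence two vertices $u, v \in C$ are true twins in $G$ precisely when $N(u) \cap S = N(v) \cap S$, and there are at most $2^{|S|} = 2^{\cvd}$ possible such neighborhoods. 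Therefore $l \le 2^{\cvd}$.

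Next I would apply the reduction exhaustively. For each clique $C$ and each module $M \subseteq C$, the Claim permits deleting vertices whenever $|M| \ge l + |S| + 3$, bringing $|M|$ down to $l+|S|+1$. I should verify that this process is well-defined: deleting vertices of $M$ keeps $S$ a cluster-vertex-deletion set, keeps each $C$ a (smaller) clique, preserves the true-twin partition of $C$, and in particular does not change $l$, since no module is emptied (we always retain at least $l+|S|+1 \ge 1$ vertices). Consequently, after exhaustive reduction every module satisfies $|M| \le l + |S| + 2 \le 2^{\cvd} + \cvd + 2$, and by the Claim the existence of a HIST is preserved throughout.

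Finally I would assemble the clique-size bound. Summing module sizes, $|C| = \sum_{i=1}^{l} |M_i| \le l\,(l + \cvd + 2) \le 2^{\cvd}(2^{\cvd} + \cvd + 2)$, using $l \le 2^{\cvd}$ and monotonicity in $l$. A maximum clique of $G$ can contain vertices of $G \setminus S$ from only a single $C$ (again because distinct cliques of $G\setminus S$ share no edge), together with at most $|S| = \cvd$ vertices of $S$; hence its size is at most $|C| + \cvd \le 2^{\cvd}(2^{\cvd} + \cvd + 2) + \cvd \le 2^{\cvd}(2^{\cvd} + \cvd + 3) + \cvd$, which is the claimed bound.

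This argument is essentially bookkeeping, so I do not anticipate a genuine obstacle; the only points demanding care are the invariance of $l$ and of the true-twin partition under the reduction, and remembering that a clique of $G$ may absorb up to $\cvd$ vertices of $S$ on top of the reduced clique $C$, which is exactly where the additive $\cvd$ term originates. The small slack between the $+2$ I derive and the $+3$ appearing inside the product is harmless and merely yields a cleaner final expression.
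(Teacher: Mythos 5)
Your proof is correct and takes essentially the same route as the paper, whose argument is the same counting stated in one line: at most $2^{\cvd}$ true-twin modules per clique (distinguished only by their neighborhoods in $S$), each module reduced to size at most $l+|S|+2$ by the Claim, and at most $\cvd$ additional vertices of $S$ joining a single clique of $G \setminus S$. Your extra care about the invariance of $l$ and of the twin partition under deletion, and your slightly tighter constant ($+2$ rather than the paper's $+3$), are harmless refinements that still imply the stated bound.
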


From \Cref{lem:cvd2}, it follows that in graphs where the cluster vertex deletion number $\cvd$ is bounded, the size of the graph can be reduced while preserving the existence of a HIST, such that its treewidth is at most $2^{\cvd+1}(2^{\cvd+1} + 2 + 2\cvd) + \cvd$.  
Therefore, by \Cref{thm:treewidth}, the problem is fixed-parameter tractable (FPT) with respect to $\cvd$.

\begin{theorem}\label{thm:cvd}
    Finding a HIST is fixed-parameter tractable when parameterized by cluster deletion number.
\end{theorem}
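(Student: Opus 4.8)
The plan is to assemble the structural results already established into a kernelization-plus-treewidth pipeline, so that the theorem follows essentially as a corollary. First I would compute a cluster vertex deletion set $S$ of size $\cvd$. Since a graph is a cluster graph exactly when it contains no induced $P_3$, such a set can be found by the standard branching algorithm that locates an induced path on three vertices and recurses on deleting one of its three endpoints; this runs in $O^*(3^{\cvd})$ time and is therefore FPT. Having fixed $S$, I would determine the cliques $\mathcal{C}$ of $G \setminus S$ and, within each clique $C \in \mathcal{C}$, partition its vertices into true-twin modules by grouping them according to their neighborhood in $S$: two vertices of $C$ are adjacent and agree on the whole of $C$, so they are true twins precisely when they agree on $N(\cdot) \cap S$. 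Because there are at most $2^{\cvd}$ distinct subsets of $S$, each clique splits into at most $l \le 2^{\cvd}$ such modules.

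Next I would apply the reduction justified by the Claim: whenever a module $M \subseteq C$ satisfies $|M| \ge l + |S| + 3$, delete vertices of $M$ until $|M| = l + |S| + 1$. By the Claim each such deletion preserves the existence of a HIST, so iterating it over all cliques and all modules produces a graph $G'$ that admits a HIST if and only if $G$ does. By \Cref{lem:cvd2}, every clique of $G' \setminus S$ then has size bounded by a function of $\cvd$ alone.

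The final step converts this bounded clique size into bounded treewidth. Since $G' \setminus S$ is a cluster graph, it has a tree decomposition in which each bag is a single clique, so its width equals the maximum clique size minus one; inserting the $\cvd$ vertices of $S$ into every bag yields a valid tree decomposition of $G'$ of width at most (maximum clique size) $- 1 + \cvd$, which is again a function of $\cvd$. Applying \Cref{thm:treewidth}, I would decide whether $G'$ admits a HIST in time depending only on this treewidth, hence only on $\cvd$, times a polynomial in $n$; the answer transfers back to $G$ by the Claim. As computing $S$, performing the reduction, building the decomposition, and running the treewidth algorithm are each FPT in $\cvd$, so is the whole procedure.

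I expect the only genuinely delicate point to be the bookkeeping behind the clique-size bound and the resulting treewidth estimate, since one must be careful that the module count $l$ and the retained module size $l+|S|+1$ combine to a function of $\cvd$ only. However, this is exactly what \Cref{lem:cvd2} already isolates, so the remaining assembly is routine and the theorem follows directly from the preceding lemmas together with the treewidth result.
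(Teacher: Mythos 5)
Your proposal is correct and follows essentially the same route as the paper: apply the module-based reduction (via \Cref{lem:cvd}, the Claim, and \Cref{lem:cvd2}) to bound clique sizes, observe that the kernelized graph has treewidth bounded by a function of $\cvd$, and conclude via \Cref{thm:treewidth}. The extra details you supply—the $O^*(3^{\cvd})$ branching to find $S$, the identification of true-twin modules by neighborhoods in $S$, and the explicit tree decomposition obtained by adding $S$ to every clique bag—are exactly the steps the paper leaves implicit, so the argument matches.
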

%
% ---- Bibliography ----
%
% BibTeX users should specify bibliography style 'splncs04'.
% References will then be sorted and formatted in the correct style.
%
 \bibliographystyle{splncs04}
 \bibliography{HIST}
\end{document}